\numberwithin{equation}{section}
\newtheorem{theorem}{Theorem}[section]
\newtheorem{lemma}[theorem]{Lemma}
\theoremstyle{definition}
\newtheorem{remark}[theorem]{Remark}
\DeclareMathOperator{\real}{Re}
\begin{document}
\title{A solvable model of the breakdown of the adiabatic approximation}

\author{
A. Galtbayar\thanks{
Department of Applied Mathematics, National University of Mongolia, 
University Street 3, Ulaanbaatar, 21-046, Mongolia. 
}\,, 
A. Jensen\thanks{
Department of Mathematical Sciences, Aalborg University, Skjernvej 4A, 
DK-9220 Aalborg \O{}, Denmark} \,,
and  
K. Yajima\thanks{
Department of Mathematics, Gakushuin University, 
1-5-1 Mejiro, Toshima-ku, Tokyo 171-8588, Japan. 
} 
}

%
%
%
\maketitle 
\begin{abstract}
Let $L\geq0$ and $0<\varepsilon\ll1$. Consider the following time-dependent  family of $1D$ Schr\"{o}dinger equations with scaled harmonic oscillator potentials
$
i\varepsilon\partial_t u_{\varepsilon}=-\tfrac12\partial_x^2u_{\varepsilon}+V(t,x)u_{\varepsilon}$,
$u_{\varepsilon}(-L-1,x)=\pi^{-1/4}\exp(-x^2/2)
$,
where
$
V(t,x)=
(t+L)^2x^2/2$, $t<-L$,
$
V(t,x)=
0$,  $-L\leq t \leq L$, and
$
V(t,x)=(t-L)^2x^2/2$, $t>L$.
The initial value problem is explicitly solvable in terms of Bessel functions. Using the explicit solutions we show that the adiabatic theorem breaks down as $\varepsilon\to 0$. 
For the case $L=0$ complete results are obtained. The survival probability of the ground state $\pi^{-1/4}\exp(-x^2/2)$ at microscopic time $t=1/\varepsilon$ is $1/\sqrt{2}+O(\varepsilon)$. For $L>0$ the framework for further computations and preliminary results are given.
\end{abstract}

\section{Introduction} 

Let $\mathcal{H}$ be a Hilbert space and $\{H(t) \colon -a<t<a\}$ 
a family of selfadjoint operators in $\mathcal{H}$. Suppose that the
time dependent Schr\"odinger equation 
\begin{equation} \label{adiabatic-t}
i\varepsilon\partial_t u(t) = H(t) u(t)
\end{equation} 
with a small parameter $0<\varepsilon\ll 1$ generates a unique unitary 
propagator $U_\varepsilon(t,s)$ and that $t \mapsto (H(t)-i)^{-1}\in \mathbf{B}(\mathcal{H})$ 
is of class $PC^2(\mathbb{R})$, i.e. piecewise $C^2$. 
Suppose further that $H(t)$ has an  isolated simple 
eigenvalue $\lambda(t)$ with  an associated 
normalized eigenfunction $\varphi(t)$, both of class 
$PC^1$ for $t \in (-a,a)$, such that 
\begin{equation*}
H(t) \varphi(t) = \lambda(t) \varphi(t), \ \ \|\varphi(t)\|^2 = 1, \quad -a<t<a.
\end{equation*}
Then the classical theorem of adiabatic approximation due 
to Born-Fock\cite{BF} and Kato\cite{Ka}  
implies that the solution $u_\varepsilon(t)=U_\varepsilon(t,0)\varphi(0)$ of the initial value 
problem: 
\begin{equation} \label{adiabatic-t0}
i\varepsilon\partial_t u_\varepsilon(t) = H(t) u_\varepsilon(t), \quad u_\varepsilon(0)= \varphi(0), 
\end{equation} 
with a small parameter $0<\varepsilon\ll1$ satisfies for a $\delta<a$
\begin{equation} \label{ad-t}
\|u_\varepsilon(t) -e^{-i\varepsilon^{-1}\int_0^t \lambda(s) ds}\varphi(t)\| \leq C_\delta \varepsilon, \quad |t|<\delta,
\end{equation} 
where $\|\cdot \|$ is the norm of $L^2(\mathbb{R})$. More precisely, a gap condition is imposed, i.e. assume that
\begin{equation*}
\inf\bigl\{\textrm{dist}\bigl(\{\lambda(t)\},
\sigma(H(t))\setminus\{\lambda(t)\}\bigr):\,-\delta<t<\delta
\bigr\}>0.
\end{equation*}
Furthermore, the phase of $\varphi(t)$ has to be fixed correctly. Let
\begin{equation*}
P(t)=-\frac{1}{2\pi i}\int_{|z-\lambda(t)|=\eta}
(H(t)-z)^{-1}dz
\end{equation*}
be the (Riesz) projection onto the eigenspace $\mathrm{Ker}(H(t)-\lambda(t))$. Then for a sufficiently small $\delta>0$ one defines for $-\delta<t<\delta$
\begin{equation*}
\varphi(t)=\frac{P(t)\varphi(0)}{\|P(t)\varphi(0)\|}.
\end{equation*}
With this choice the result \eqref{ad-t} holds.

This adiabatic theorem has been substantially 
elaborated and extended to more general situations, and it has been widely 
applied in various 
fields of mathematical physics, see e.g. Teufel's monograph \cite{Teufel} 
and the references therein.

\noindent
\textbf{The eigenvalue dives into the continuum.} 
We consider the situation that eigenvalue $\lambda(t)$ dives into the 
continuous spectrum of $H(t)$ at, say, $t=-L>-a$, stays in the continuum 
of $H(t)$ for $-L \leq t \leq L$, and comes out again for $t>L$ as an 
isolated eigenvalue of $H(t)$. Under the assumption 
that $\lambda(t)$ remains as an (embedded) eigenvalue of $H(t)$ 
for $-L\leq t \leq L$, then a general argument has been established 
and a result similar to \eqref{ad-t} is obtained (see Teufel\cite{Teufel}). 
Moreover, the result has been applied by D\"urr-Pickl \cite{DP} 
to the Dirac equation to explain the adiabatic pair creation and 
by Cornean-Jensen-Kn\"{o}rr-Nenciu \cite{CJK} 
to specific finite rank perturbations of Schr\"odinger equations. 
However, if $H(t)$ has no embedded eigenvalues for $-L\leq t \leq L$ 
and the eigenvalue $\lambda(t)$ ``melts away into the continuum", then there is no 
general theory to deal with the problem; it is even not clear what is meant 
by the adiabatic approximation. We should mention that embedded eigenvalues 
in the continuum 
are very unstable under a perturbation and for genuinely time dependent 
Hamiltonians embedded eigenvalues would hardly persist 
for any finite time interval. 

\noindent
\textbf{Harmonic oscillators which become the free Hamiltonian.}
To understand these phenomena we study an explicitly solvable model.
More precisely, we study the solution of the Schr\"odinger equation 
which can be written in terms of the macroscopic time variable as 
\begin{equation} \label{harmonic-a}
i\varepsilon \partial_t u_\varepsilon = -\frac12 \partial_x^2 u_\varepsilon + V(t, x) u_\varepsilon, \quad 
u_\varepsilon(-L-1, x) = \varphi_0(x),  
\end{equation} 
which is a scaled harmonic oscillator 
for $t <-L$ and $t>L$ and $V(t,x)=0$ for $-L\leq t \leq L$:
\begin{equation}\label{macro}
V(t,x)=\begin{cases}
(t+L)^2 x^2/2, & t < - L,\\
0, & -L\leq t\leq L,\\
(t-L)^2 x^2/2, & t > L,
\end{cases}
\end{equation}    
and the initial state $\varphi_0(x)=\pi^{-\frac14}e^{-x^2/2}$ is the 
normalized ground state of the initial Hamiltonian 
$H(-L-1)=-(1/2)\partial_x^2 + (1/2)x^2$. 
We are particularly interested in the asymptotic behavior as $\varepsilon \to 0$ 
of $u_\varepsilon(t,x)$ at $t=L+1$ when $H(t)$ again becomes $-(1/2)\partial_x^2 + (1/2)x^2$. 

It is well known that the equation \eqref{harmonic-a} generates a unique 
unitary propagator $\{U_\varepsilon (t, s)\colon -\infty<t,s<\infty\}$ which 
is simultaneously an isomorphism of $\mathcal{S}(\mathbb{R})$ and of 
$\Sigma(2n)$, $n=0,1,\dots$, the domain of 
$(-(1/2)\partial_x^2 + (1/2)x^2)^n$. For $\varphi \in \Sigma(2)$, 
$\mathbb{R} \times \mathbb{R} \ni (t,s) \mapsto U_\varepsilon (t,s)\varphi\in L^2(\mathbb{R})$  is $C^1$ in 
$(t,s)$ and $u_\varepsilon(t)= U_\varepsilon(t, -L-1)\varphi $ (see Fujiwara\cite{Fu}).
We should emphasize, however, that $H(t)$ fails to satisfy the 
assumptions of the theory of adiabatic approximation in 
two ways: (1) all eigenvalues dive into continuum simultaneously; 
(2) the domain of $H(t)$ has a sharp transition at time 
$t=-L$ and $t=L$ and the resolvent $(H(t)-i)^{-1}$ is not of 
class $C^1$ at these points.  

We shall study \eqref{harmonic-a} in the microscopic 
time variable, viz. we change the time variable to $s= t/\varepsilon$ and study 
$v_\varepsilon (s,x)= u_\varepsilon ({\varepsilon}s, x)$. $v_\varepsilon(s,x)$ satisfies 
\begin{equation} \label{harmonic}
i\partial_s v_\varepsilon = -\frac12 \partial_x^2 v_\varepsilon + V(\varepsilon s, x) v_\varepsilon\,
\ \  v_\varepsilon(-{\varepsilon}^{-1}(L+1),x) = \varphi_0(x) 
\end{equation} 
and, as we only consider \eqref{harmonic} in what follows 
{\it we denote the microscopic time variable again by $t$ instead of $s$}. 
Our result will be rather complete in the case $L=0$, however, when 
$L>0$, the situation becomes exceedingly complicated and we have to 
be satisfied with  partial results which should be considered as the starting 
point for  further study.  

\noindent
\textbf{Summary of results.}
The main results in the case $L=0$ are stated in Theorem~\ref{1}. Let $v_{\varepsilon}(t,x)$ denote the solution to \eqref{harmonic} with initial state $\varphi_0(x)=\pi^{-\frac14}e^{-x^2/2}$ (at time $t=-1/\varepsilon$). Then at time $t=1/\varepsilon$ we have
\begin{equation*}
v_{\varepsilon}(1/\varepsilon,x)=m_{\varepsilon,0}(1/\varepsilon)
e^{-l^{\ast}_{\varepsilon,0}(1/\varepsilon)x^2/2}+O(\varepsilon)
\end{equation*}
as $\varepsilon\to0$. Here $m_{\varepsilon,0}(1/\varepsilon)$ and 
$l^{\ast}_{\varepsilon,0}(1/\varepsilon)$ are given by \eqref{m0} and \eqref{l0}, respectively. Note that these coefficients are highly oscillatory as $\varepsilon \to 0$, exhibiting the breakdown of the adiabatic approximation.

This result allows one to compute the survival probability of the time $t=-1/\varepsilon$ initial state $\varphi_0$ to time $t=1/\varepsilon$. The result is
\begin{equation*}
|\langle v_{\varepsilon}(1/\varepsilon,x),\varphi_0(x)\rangle|^2
=\frac{1}{\sqrt{2}}+O(\varepsilon)
\end{equation*}
as $\varepsilon\to0$. Thus the initial state survives with a positive probability which is less than $1$. This survival probability was computed in Bachmann et al.\cite{BFG} by different methods.

The results in the case $L>0$ are stated in Theorem~\ref{thm2}. These partial results are somewhat complicated to state. Roughly, there exist sequences $\varepsilon_n\to0$ as $n\to\infty$ such that there is a positive survival probability of the initial state, which however rapidly tends to zero as the length of (microscopic) time $2L/\varepsilon$ spent in the continuum increases.

\section{The case $L=0$}
We first consider the case $L=0$, viz. the case where the eigenvalues of $H(t)$ 
touch upon the continuum only at time $t=0$, but all simultaneously. 
We should mention that the 
problem for this case has been studied by Bachmann~et~al.\cite{BFG} 
by a  method very different from ours and the results slightly overlap. 
 
We record a few lemmas which we shall use in what follows. The first one 
can be found in Yajima\cite{Ya}.  

\begin{lemma} Let $l_{\varepsilon}(t)$ be the solution of the Riccati equation 
\begin{equation} \label{Riccati-eq}
l_{\varepsilon}'(t) + i l_{\varepsilon}(t)^2 = i \varepsilon^2 t^2
\end{equation} 
with initial condition  
\begin{equation} \label{IC}
l_{\varepsilon}(-1/\varepsilon)=1.
\end{equation} 
Suppose that $m_{\varepsilon}(t)$ solves  
\begin{equation}
\label{m}
i m_{\varepsilon}'(t) = \frac12 m_{\varepsilon}(t) l_{\varepsilon}(t), 
\quad m_{\varepsilon}(-1/\varepsilon)=\pi^{-\frac14}. 
\end{equation} 
Then, $|m_{\varepsilon}(t)|^4=\pi^{-1}\real l_{\varepsilon}(t)$ and  
\begin{equation} \label{sol}
v_\varepsilon(t,x)= m_{\varepsilon}(t) e^{-l_{\varepsilon}(t)x^2/2} 
\end{equation} 
is the solution of the initial value problem for the Schr\"odinger equation 
\begin{equation} 
i\partial_t v_\varepsilon =- (1/2)\partial^2_x v_\varepsilon +  (t^2 \varepsilon^2 x^2/2)v_\varepsilon, \quad 
v_\varepsilon(-1/\varepsilon,x)= \pi^{-\frac14}e^{-\frac{x^2}{2}}.
\end{equation}  
\end{lemma}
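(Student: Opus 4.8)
The plan is to verify directly that the Gaussian ansatz \eqref{sol} satisfies the Schr\"odinger equation, reducing the PDE to the two scalar ODEs \eqref{Riccati-eq}--\eqref{m}, and then to establish the identity $|m_\varepsilon(t)|^4 = \pi^{-1}\real l_\varepsilon(t)$ as a conservation law. First I would substitute $v_\varepsilon(t,x) = m_\varepsilon(t) e^{-l_\varepsilon(t)x^2/2}$ into $i\partial_t v_\varepsilon = -\tfrac12\partial_x^2 v_\varepsilon + (t^2\varepsilon^2 x^2/2)v_\varepsilon$. The left-hand side is $\bigl(i m_\varepsilon'(t) - \tfrac{i}{2}m_\varepsilon(t) l_\varepsilon'(t) x^2\bigr)e^{-l_\varepsilon(t)x^2/2}$, and since $\partial_x^2 e^{-l x^2/2} = (l^2 x^2 - l)e^{-l x^2/2}$, the right-hand side is $\bigl(\tfrac12 l_\varepsilon(t) m_\varepsilon(t) - \tfrac12 l_\varepsilon(t)^2 m_\varepsilon(t) x^2 + \tfrac12 t^2\varepsilon^2 x^2 m_\varepsilon(t)\bigr)e^{-l_\varepsilon(t)x^2/2}$. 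Matching the $x^0$ terms gives $i m_\varepsilon'(t) = \tfrac12 m_\varepsilon(t) l_\varepsilon(t)$, which is \eqref{m}; matching the $x^2$ terms gives $-\tfrac{i}{2}m_\varepsilon(t) l_\varepsilon'(t) = -\tfrac12 l_\varepsilon(t)^2 m_\varepsilon(t) + \tfrac12 t^2\varepsilon^2 m_\varepsilon(t)$, i.e., after dividing by $-\tfrac{i}{2}m_\varepsilon(t)$, precisely $l_\varepsilon'(t) + i l_\varepsilon(t)^2 = i\varepsilon^2 t^2$, which is \eqref{Riccati-eq}. The initial conditions $l_\varepsilon(-1/\varepsilon)=1$ and $m_\varepsilon(-1/\varepsilon)=\pi^{-1/4}$ make $v_\varepsilon(-1/\varepsilon,x) = \pi^{-1/4}e^{-x^2/2}$, matching the prescribed initial state.

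Next I would address the identity $|m_\varepsilon(t)|^4 = \pi^{-1}\real l_\varepsilon(t)$. From \eqref{m} one computes $\frac{d}{dt}|m_\varepsilon(t)|^2 = 2\real\bigl(\overline{m_\varepsilon(t)} m_\varepsilon'(t)\bigr) = 2\real\bigl(\overline{m_\varepsilon(t)}\cdot(-\tfrac{i}{2})m_\varepsilon(t) l_\varepsilon(t)\bigr) = |m_\varepsilon(t)|^2\,\real(-i l_\varepsilon(t)) = |m_\varepsilon(t)|^2 \,\mathrm{Im}\, l_\varepsilon(t)$. Similarly, differentiating $\real l_\varepsilon$ using \eqref{Riccati-eq} gives $\frac{d}{dt}\real l_\varepsilon(t) = \real\bigl(i\varepsilon^2 t^2 - i l_\varepsilon(t)^2\bigr) = \real(-i l_\varepsilon(t)^2) = 2\,(\real l_\varepsilon(t))(\mathrm{Im}\, l_\varepsilon(t))$, using $\real(i\varepsilon^2 t^2)=0$. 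Hence both $|m_\varepsilon(t)|^4$ and $\real l_\varepsilon(t)$ satisfy the same linear ODE $y'(t) = 2(\mathrm{Im}\, l_\varepsilon(t))\,y(t)$: indeed $\frac{d}{dt}|m_\varepsilon|^4 = 2|m_\varepsilon|^2\frac{d}{dt}|m_\varepsilon|^2 = 2(\mathrm{Im}\, l_\varepsilon)|m_\varepsilon|^4$. At $t=-1/\varepsilon$ we have $|m_\varepsilon(-1/\varepsilon)|^4 = \pi^{-1}$ and $\pi^{-1}\real l_\varepsilon(-1/\varepsilon) = \pi^{-1}$, so the two solutions agree for all $t$ by uniqueness for this scalar linear ODE. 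In particular $\real l_\varepsilon(t) > 0$ for all $t$, so the Gaussian $e^{-l_\varepsilon(t)x^2/2}$ stays in $L^2(\mathbb{R})$ and the manipulations are justified.

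Finally I would remark that \eqref{sol} is the genuine solution of the initial value problem, not merely a formal one: since $l_\varepsilon$ and $m_\varepsilon$ are smooth in $t$ (the right-hand sides of \eqref{Riccati-eq} and \eqref{m} are smooth, and $\real l_\varepsilon>0$ prevents blow-up of $l_\varepsilon$ — this also needs a brief argument that $l_\varepsilon$ does not escape to infinity, which follows because the Riccati equation linearizes to a Hermite/Bessel-type equation with no finite-time singularity, or alternatively because $\real l_\varepsilon>0$ together with the a priori bound $|m_\varepsilon|^4=\pi^{-1}\real l_\varepsilon$ controls the trajectory), the function $v_\varepsilon(t,\cdot)$ lies in $\Sigma(2)$ with $C^1$ dependence on $t$, and by the uniqueness of the unitary propagator $U_\varepsilon(t,s)$ recalled in the introduction it coincides with $U_\varepsilon(t,-1/\varepsilon)\varphi_0$. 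The main obstacle I anticipate is not the formal substitution, which is routine, but making rigorous the claim that the Riccati solution $l_\varepsilon$ exists globally and stays in the right half-plane; the cleanest route is to pass to the linearization $l_\varepsilon = \phi'/(i\phi)$ (or a similar substitution) turning \eqref{Riccati-eq} into a second-order linear ODE — this is exactly the Bessel-function connection advertised in the abstract — and then read off global existence and the sign of $\real l_\varepsilon$ from the explicit solution; alternatively one invokes the results of Yajima\cite{Ya} cited just before the lemma. Everything else is bookkeeping with the two scalar ODEs.
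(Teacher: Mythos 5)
Your formal computations are correct: the substitution of the Gaussian ansatz into the Schr\"odinger equation, matching the $x^0$ and $x^2$ coefficients, does reproduce \eqref{m} and \eqref{Riccati-eq}; and the verification that $|m_\varepsilon|^4$ and $\real l_\varepsilon$ both satisfy $y'=2(\mathrm{Im}\,l_\varepsilon)\,y$ with the same value $\pi^{-1}$ at $t=-1/\varepsilon$ is a clean way to derive the conservation law (which is, of course, just unitarity: $\|m_\varepsilon e^{-l_\varepsilon x^2/2}\|_{L^2}^2=|m_\varepsilon|^2\sqrt{\pi/\real l_\varepsilon}\equiv1$). The paper itself gives no proof of this lemma --- it simply cites Yajima\cite{Ya} --- so your write-up supplies an argument the paper omits, and it is the natural one.

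You also correctly locate the one genuine gap, namely global existence of $l_\varepsilon$ on $\mathbb{R}$, but you leave it as a sketch. Two remarks. First, your ``alternative'' argument --- that $\real l_\varepsilon>0$ together with $|m_\varepsilon|^4=\pi^{-1}\real l_\varepsilon$ ``controls the trajectory'' --- does not close the gap: these identities hold only on the maximal interval of existence and give no upper bound on $|l_\varepsilon|$, so they cannot by themselves rule out finite-time blow-up. Second, the linearization you propose is the right fix and is exactly what the paper develops next: writing $l_\varepsilon=-i\,\tilde w_\varepsilon'/\tilde w_\varepsilon$ turns \eqref{Riccati-eq} into $\tilde w_\varepsilon''+\varepsilon^2 t^2\tilde w_\varepsilon=0$, whose solutions are entire, and the initial condition $l_\varepsilon(-1/\varepsilon)=1\notin i\mathbb{R}$ forces $\tilde w_\varepsilon$ to be a non-$\mathbb{R}$-proportional linear combination of two real fundamental solutions, which can have no real zero (if $\tilde w_\varepsilon(t_0)=0$ at a real $t_0$ then either the ratio of coefficients is real, or both real solutions vanish at $t_0$ and the Wronskian is zero). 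This is the content of Lemma~\ref{hoho}, there phrased via the explicit Bessel-function solutions and the interlacing of their zeros. If you absorb that step --- or simply invoke Lemma~\ref{hoho} once the connection $l_\varepsilon(t,\kappa)=-i\tilde w_\varepsilon'(t,\kappa)/\tilde w_\varepsilon(t,\kappa)$ is made --- your proof is complete.
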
 

\noindent
\textbf{General solutions of the Riccati equation.} 
Bessel functions of the first kind $J_\nu(z)$ and the second kind $Y_\nu(z)$ 
are defined by 
\begin{gather} \label{Bessel-def}
J_\nu (z)= \left(\frac{z}{2}\right)^\nu 
\sum_{k=0}^\infty (-1)^k 
\frac{(z^2/4)^k}{k!\,\Gamma(\nu+k+1)}, \\
Y_\nu(z)= \frac{J_\nu(z)\cos\,\nu{\pi} - J_{-\nu}(z)}{\sin\,{\nu}\pi}.
\label{Bessel-def-a}
\end{gather} 
They are linearly independent solutions of Bessel's equation 
\[
z^2 J''_{\nu}(z)+ z J'_\nu(z)+ (z^2-\nu^2)J_{\nu}(z)=0 
\]
and their positive zeros are interlaced (see DLMF\cite{DLMF} (10.21.3)).  

\begin{lemma} \label{hoho}
Let $\varepsilon>0$ and  $\kappa\in (\mathbb{C}\cup\{\infty\})
\setminus\mathbb{R}$.
Define $w(s,\kappa)$ for $s\geq 0$ by  
\begin{equation} \label{w-def}
w(s,\kappa)= 
s^{1/8}
\big(- J_{-\frac14}(2\sqrt{s}) +{\kappa} J_{\frac14}(2\sqrt{s})\big)
\end{equation} 
where the principal branches are assumed for the Bessel functions, and in the case $\kappa=\infty$ the first term is omitted. 
Define 
\begin{equation} \label{wtil-def}
\tilde{w}_{\varepsilon}(t,\kappa)= w\bigl(\frac{\varepsilon^2 t^4}{16},\kappa\bigr), \quad t>0.
\end{equation} 
Then, $\tilde{w}_{\varepsilon}(t,\kappa)$ may be analytically continued to an entire function 
of $t \in \mathbb{C}$ and it does not vanish on the real line. 
\end{lemma}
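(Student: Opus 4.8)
The plan is to unfold the power series \eqref{Bessel-def} after the substitutions $z=2\sqrt{s}$ and $s=\varepsilon^2t^4/16$: the fractional powers of $s$ recombine into ordinary power series in $t$, so the entire continuation is automatic, and the absence of real zeros then follows from the reality on $\mathbb{R}$ of the two constituent solutions together with the interlacing of the positive zeros of $J_{\pm1/4}$ recorded in the excerpt.

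Write $J_{\nu}(z)=(z/2)^{\nu}\Phi_{\nu}(z^2)$, where $\Phi_{\nu}(\zeta)=\sum_{k\ge0}\frac{(-1)^k}{k!\,\Gamma(\nu+k+1)}(\zeta/4)^k$ is entire. With $z=2\sqrt{s}$, so that $z/2=\sqrt{s}$ and $z^2=4s$, and with the principal branches, one has for $s>0$
\[
s^{1/8}J_{-1/4}(2\sqrt{s})=\Phi_{-1/4}(4s)=:g_1(s),\qquad s^{1/8}J_{1/4}(2\sqrt{s})=s^{1/4}\Phi_{1/4}(4s)=:s^{1/4}g_2(s),
\]
with $g_1,g_2$ entire and $g_1(0)=1/\Gamma(3/4)\ne0$. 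Inserting $s=\varepsilon^2t^4/16$ and continuing the branch of $(\varepsilon^2t^4/16)^{1/4}$ that equals $\varepsilon^{1/2}t/2$ for $t>0$ gives, for finite $\kappa$,
\[
\tilde{w}_{\varepsilon}(t,\kappa)=-h_1(t)+\kappa\,h_2(t),\qquad h_1(t):=g_1(\varepsilon^2t^4/16),\quad h_2(t):=\tfrac{\varepsilon^{1/2}}{2}\,t\,g_2(\varepsilon^2t^4/16),
\]
while $\tilde{w}_{\varepsilon}(t,\infty)=h_2(t)$. Each of $h_1,h_2$ is entire in $t$, being a composition and a product of entire functions, hence so is $\tilde{w}_{\varepsilon}(\cdot,\kappa)$; and it coincides with \eqref{wtil-def} for $t>0$ by the very definition of the Bessel series. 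This proves the first assertion.

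For the second assertion, fix a finite $\kappa$ with $\operatorname{Im}\kappa\ne0$. For real $t$ the argument $\varepsilon^2t^4/16$ is nonnegative and $g_1,g_2$ have real Taylor coefficients, so $h_1(t),h_2(t)\in\mathbb{R}$. If $\tilde{w}_{\varepsilon}(t_0,\kappa)=0$ for some $t_0\in\mathbb{R}$, then $\kappa\,h_2(t_0)=h_1(t_0)\in\mathbb{R}$, and since $\operatorname{Im}\kappa\ne0$ this forces $h_2(t_0)=0$, whence $h_1(t_0)=0$. The value $t_0=0$ is excluded because $h_1(0)=g_1(0)\ne0$. If $t_0\ne0$, set $s_0:=\varepsilon^2t_0^4/16>0$; since $s_0^{1/8}>0$ and $\varepsilon^{1/2}t_0/2\ne0$, the equations $h_1(t_0)=0$ and $h_2(t_0)=0$ say precisely that the positive number $2\sqrt{s_0}$ is a common zero of $J_{-1/4}$ and $J_{1/4}$, which is impossible since their positive zeros interlace (see DLMF\cite{DLMF} (10.21.3)). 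Hence $\tilde{w}_{\varepsilon}(\cdot,\kappa)$ has no real zero; this is precisely the regime relevant to \eqref{Riccati-eq}--\eqref{IC}, since the initial condition there forces a non-real $\kappa$.

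The only step that calls for genuine care is the first one: one must keep the principal branches straight so that the prefactor $s^{\pm1/8}$ cancels $(\sqrt{s})^{\mp1/4}$, respectively produces the extra factor $s^{1/4}$, and one must check that the continuation of $(\varepsilon^2t^4/16)^{1/4}$ from $t>0$ is the entire function $\varepsilon^{1/2}t/2$ --- equivalently that $h_1$ is even and $h_2$ is odd, so that the reality used above persists for $t<0$ as well. Once the explicit form $\tilde{w}_{\varepsilon}=-h_1+\kappa h_2$ with $h_1,h_2$ real-valued and entire is in hand, both conclusions are short: the non-vanishing is just the elementary remark that a non-real linear combination of two real functions can vanish only at a common zero of them, together with the classical interlacing quoted above.
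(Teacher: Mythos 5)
Your proof is correct and follows essentially the same route as the paper: both expand the Bessel functions as entire power series (your $\Phi_\nu$, the paper's $M_\nu$) so that the fractional powers of $s=\varepsilon^2t^4/16$ recombine into integer powers of $t$, and both obtain the non-vanishing on $\mathbb{R}$ from the fact that a non-real linear combination of the real-valued entire functions $h_1,h_2$ can vanish only at a common zero, which the interlacing of the zeros of $J_{\pm1/4}$ forbids; your parity observation ($h_1$ even, $h_2$ odd) is just the explicit form of the paper's identity $\tilde{w}_\varepsilon(-t,\kappa)=\tilde{w}_\varepsilon(t,-\kappa)$. One remark: like the paper's own argument, yours silently treats only finite $\kappa$; for $\kappa=\infty$ the function $\tilde{w}_\varepsilon(t,\infty)$ is proportional to $t\,M_{1/4}(\varepsilon^2t^4/16)$, which does vanish for real $t$ (since $J_{1/4}$ has positive real zeros), so the non-vanishing clause of the lemma as stated is in fact false in that (unused) case.
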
 
\begin{proof} From the definition of Bessel functions \eqref{Bessel-def}, we have 
\begin{equation} \label{Bessel-expansion}
J_\nu (2\sqrt{s})= s^{\nu/2} M_\nu(s), \quad 
M_\nu(s)= \sum_{k=0}^\infty (-1)^k 
\frac{s^k}{k!\,\Gamma(\nu+k+1)} ,
\end{equation} 
and $M_{\nu}(s)$ is evidently an entire function of $s \in \mathbb{C}$. 
It follows that 
\begin{equation} \label{M1/4}
{w}(s,\kappa)= -M_{-\frac14}(s) + \kappa s^{\frac14}M_{\frac14}(s) 
\end{equation} 
and 
\begin{equation}  \label{winM}
\tilde{w}_{\varepsilon}(t,\kappa)= -M_{-\frac14}({\varepsilon^2 t^4}/{16}) + (\kappa \sqrt{\varepsilon}t/2)
M_{\frac14}({\varepsilon^2 t^4}/{16}) 
\end{equation} 
is an entire function of $t\in \mathbb{C}$. As $w(s,\kappa)$ is a linear combination 
of $J_{\frac14}(2\sqrt{s})$ and $J_{-\frac14}(2\sqrt{s})$ with 
non $\mathbb{R}$-related coefficients, $\tilde{w}_{\varepsilon}(t,\kappa)\not=0$ for $t>0$. 
But \eqref{winM} shows $\tilde{w}_{\varepsilon}(-t,\kappa)= \tilde{w}_{\varepsilon}(t,-\kappa)$ 
and the same is true for $t <0$ and, 
$\tilde{w}_{\varepsilon}(0,\kappa)= -M_{-\frac14}(0)= -\Gamma(3/4)^{-1}\not=0$.
This completes the proof. 
\end{proof} 

\begin{lemma} Let $\varepsilon\not=0$,
$\kappa\in (\mathbb{C}\cup\{\infty\})
\setminus\mathbb{R}$,
and $s= {\varepsilon^2 t^4}/{16}$. Let $w(s,\kappa)$ and $\tilde{w}_{\varepsilon}(t,\kappa)$ be 
as in Lemma~\ref{hoho}. Then: \\[7pt]
{\rm (1)} With $\kappa$ being an arbitrary constant, the general solution
of the Riccati equation \eqref{Riccati-eq} is given by 
\begin{equation} \label{w}
l_{\varepsilon}(t,\kappa)  = \frac{-4i s w'(s,\kappa)}{t w(s,\kappa)} 
 =- i \frac{\tilde{w}_{\varepsilon}'(t,\kappa)}{\tilde{w}_{\varepsilon}(t,\kappa)}\,. 
\end{equation}  
It is a holomorphic function of $t$ in a complex 
neighborhood of the real line. \\[7pt]
{\rm (2)} We may express $l_\varepsilon(t,\kappa)$ without using derivatives:  
\begin{align}
l_{\varepsilon}(t,\kappa) & =\frac{-4is^\frac12
\big({\kappa}J_{-\frac34}(2\sqrt{s}) + J_{\frac34}(2\sqrt{s})\big)}
{t\big(- J_{-\frac14}(2\sqrt{s}) +{\kappa} J_{\frac14}(2\sqrt{s})\big)} \label{ls} \\
& =\frac{-i\bigl(8{\kappa}\varepsilon^{\frac12}M_{-\frac34}({s}) + 
\varepsilon^2 t^3M_{\frac34}({s})\bigr)}
{2\bigl(- 2M_{-\frac14}(s)+{\kappa}\varepsilon^\frac12 t M_{\frac14}({s})\bigr)}\,.
\label{lt}
\end{align} 
{\rm (3)} For the solution $l_{\varepsilon}(t,\kappa)$ we have as $t \to 0$
\begin{equation} 
l_{\varepsilon}(t,\kappa) = i a\varepsilon^{\frac12}
\bigl( 1 + a\varepsilon^{\frac12}t+
\bigl(a\varepsilon^\frac{1}{2}t\bigr)^2+ O\big(\varepsilon^{\frac12}t\big)^3\bigr), 
\quad a=2\kappa \Gamma(3/4)/\Gamma(1/4)\,.
\label{223-at0}
\end{equation} 
\end{lemma}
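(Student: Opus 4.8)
The plan is to linearize the Riccati equation \eqref{Riccati-eq} by the substitution $l=-i\phi'/\phi$ and to recognize the resulting linear equation as one whose solutions are spanned by the functions of Lemma~\ref{hoho}. If $\phi$ solves the linear equation $\phi''(t)+\varepsilon^2t^2\phi(t)=0$, then where $\phi\neq0$ a direct differentiation of $l=-i\phi'/\phi$ gives $l'+il^2=-i\phi''/\phi=i\varepsilon^2t^2$, so $l$ solves \eqref{Riccati-eq}; conversely, if $l$ solves \eqref{Riccati-eq} on an interval, then $\phi$ defined locally by $\phi'=il\phi$, $\phi(t_0)\neq0$, satisfies $\phi''=(il'-l^2)\phi=-\varepsilon^2t^2\phi$. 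Thus it suffices to describe the solutions of $\phi''+\varepsilon^2t^2\phi=0$. Using $M_\nu'(s)=-M_{\nu+1}(s)$ (immediate from the series in \eqref{Bessel-expansion}) and the recurrence $M_{\nu-1}(s)=\nu M_\nu(s)-sM_{\nu+1}(s)$ (equivalently $J_{\nu-1}+J_{\nu+1}=(2\nu/z)J_\nu$), one checks directly that $M_{-1/4}(\varepsilon^2t^4/16)$ and $tM_{1/4}(\varepsilon^2t^4/16)$ both solve $\phi''+\varepsilon^2t^2\phi=0$ (this is also the classical reduction of $y''+\varepsilon^2t^2y=0$ to Bessel's equation of order $\pm1/4$), and their Wronskian at $t=0$ equals $M_{-1/4}(0)M_{1/4}(0)=(\Gamma(3/4)\Gamma(5/4))^{-1}\neq0$, so they are independent. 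Hence $\tilde{w}_{\varepsilon}(t,\kappa)$ from \eqref{winM}, with $\kappa$ ranging over $\mathbb{C}\cup\{\infty\}$ (the value $\kappa=\infty$ giving $tM_{1/4}(\varepsilon^2t^4/16)$ alone), runs through all solutions up to a scalar, and since $l=-i\phi'/\phi$ is insensitive to that scalar, $l_{\varepsilon}(t,\kappa)=-i\tilde{w}_{\varepsilon}'/\tilde{w}_{\varepsilon}$ is the general solution of \eqref{Riccati-eq}. The first equality in \eqref{w} is the chain rule applied to $\tilde{w}_{\varepsilon}(t,\kappa)=w(s,\kappa)$, $s=\varepsilon^2t^4/16$, using $ds/dt=\varepsilon^2t^3/4=4s/t$. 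Finally, when $\kappa\in(\mathbb{C}\cup\{\infty\})\setminus\mathbb{R}$, Lemma~\ref{hoho} gives that $\tilde{w}_{\varepsilon}(\cdot,\kappa)$ is entire and nonvanishing on $\mathbb{R}$, hence nonvanishing on a complex neighborhood of $\mathbb{R}$, so $l_{\varepsilon}(\cdot,\kappa)$ is holomorphic there.

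\textbf{Part (2).} From \eqref{M1/4} and the two identities above I would compute $w'(s,\kappa)=M_{3/4}(s)+\kappa s^{-3/4}M_{-3/4}(s)$, the $M_{5/4}$ produced by differentiation being eliminated by the recurrence at $\nu=1/4$. Substituting this and \eqref{M1/4} into the first equality of \eqref{w} and re-expressing $M_{\pm3/4}$, $M_{\pm1/4}$ through $J_{\pm3/4}(2\sqrt{s})$, $J_{\pm1/4}(2\sqrt{s})$ by \eqref{Bessel-expansion} gives \eqref{ls}; then putting $s=\varepsilon^2t^4/16$, $s^{1/4}=\varepsilon^{1/2}t/2$ and clearing the fractional powers (which cancel) gives \eqref{lt}. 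The square-root manipulations are a priori valid only for $t>0$, but since both sides are holomorphic near $\mathbb{R}$ the identities extend by analytic continuation.

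\textbf{Part (3).} This is a Taylor expansion of \eqref{lt} at $t=0$: there $s=\varepsilon^2t^4/16\to0$ and $M_\nu(s)=1/\Gamma(\nu+1)+O(\varepsilon^2t^4)$. The leading term of \eqref{lt} is $\dfrac{-i\cdot 8\kappa\varepsilon^{1/2}/\Gamma(1/4)}{2(-2/\Gamma(3/4))}=ia\varepsilon^{1/2}$, $a=2\kappa\Gamma(3/4)/\Gamma(1/4)$. Writing the denominator of \eqref{lt} as $-4M_{-1/4}(s)\bigl(1-\tfrac{\kappa\varepsilon^{1/2}tM_{1/4}(s)}{2M_{-1/4}(s)}\bigr)$, expanding the reciprocal as a geometric series, and using $\Gamma(3/4)/\Gamma(5/4)=4\Gamma(3/4)/\Gamma(1/4)$ so that $\tfrac{\kappa\varepsilon^{1/2}tM_{1/4}(0)}{2M_{-1/4}(0)}=a\varepsilon^{1/2}t$, one reads off \eqref{223-at0}; the remaining contributions (the first omitted geometric-series term, the $\varepsilon^2t^3M_{3/4}$ piece in the numerator of \eqref{lt}, and the $O(\varepsilon^2t^4)$ corrections from the $M_\nu$'s) are all $O((\varepsilon^{1/2}t)^3)$ as $t\to0$.

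\textbf{Main obstacle.} The only genuinely delicate point is Part~(1): verifying that the formula captures \emph{every} solution of \eqref{Riccati-eq}—including the exceptional value $\kappa=\infty$ and the fact that real-$\kappa$ solutions have poles precisely where $\tilde{w}_{\varepsilon}$ vanishes—and handling the branch of $\sqrt{s}$ together with the entireness claim, all of which are, however, already controlled by Lemma~\ref{hoho}. Parts (2) and (3) are routine Bessel-function bookkeeping.
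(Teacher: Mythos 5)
Your proposal is correct, and on the key structural point it takes a genuinely different route from the paper. For Part~(1), the paper does not derive the Bessel-function form of the general solution; it cites Davis (pp.~67--78) for the fact that $l_\varepsilon(t)=-4i\,s\,w_1'(s)/(t\,w_1(s))$ with $w_1(s)=s^{1/8}\bigl(aJ_{1/4}(2\sqrt{s})+bY_{1/4}(2\sqrt{s})\bigr)$ solves the Riccati equation, and then converts the $(a,b)$ parametrization to $\kappa$ via $Y_{1/4}=J_{1/4}-\sqrt{2}J_{-1/4}$. You instead make the argument self-contained: the log-derivative substitution $l=-i\phi'/\phi$ linearizes \eqref{Riccati-eq} to $\phi''+\varepsilon^2t^2\phi=0$, and you identify $M_{-1/4}(\varepsilon^2t^4/16)$ and $tM_{1/4}(\varepsilon^2t^4/16)$ as a fundamental system by verifying the ODE directly (using $M_\nu'=-M_{\nu+1}$ and the three-term recurrence) and computing the Wronskian at $t=0$. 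This removes the external reference at the cost of a few lines of bookkeeping, and it makes the role of $\kappa\in\mathbb{P}^1(\mathbb{C})$ and of Lemma~\ref{hoho} (nonvanishing on $\mathbb{R}$ for $\kappa\notin\mathbb{R}$) transparent. For Part~(2) the paper invokes the DLMF recurrence (10.6.5) for $\mathcal{C}_\nu$ and then reduces by the common factor to pass from $(a,b)$ to $\kappa$; you carry out the equivalent computation directly on $w(s,\kappa)$ using $M_\nu'=-M_{\nu+1}$ and $M_{\nu-1}=\nu M_\nu-sM_{\nu+1}$, which produces the same cancellation (your $w'(s,\kappa)=M_{3/4}(s)+\kappa s^{-3/4}M_{-3/4}(s)$ is exactly what is needed) and lands on \eqref{ls}, \eqref{lt}. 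Part~(3) is the same Taylor expansion of \eqref{lt} in both proofs; your geometric-series organization of the denominator is a clean way to read off the stated coefficients and to see that all discarded terms are $O\bigl((\varepsilon^{1/2}t)^3\bigr)$ relative to the prefactor $ia\varepsilon^{1/2}$. The only thing you leave slightly implicit is that ``every'' solution is captured: this follows from your observation that $\phi$ is determined up to scalar by $l$ on any interval of regularity and that the projective coordinate $[-1:\kappa\sqrt{\varepsilon}/2]$ sweeps out $\mathbb{P}^1(\mathbb{C})$, including $\kappa=\infty$; the restriction $\kappa\notin\mathbb{R}$ in the lemma's hypothesis is precisely what guarantees $\tilde w_\varepsilon\neq0$ on $\mathbb{R}$ (Lemma~\ref{hoho}) and hence holomorphy of $l_\varepsilon(\cdot,\kappa)$ near the real axis, which you correctly flag as the one delicate point.
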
 
\begin{proof} Define 
$w_1(s)= s^{1/8}\big( aJ_\frac14 (2\sqrt{s} )+ 
bY_{\frac14}(2\sqrt{s})\big)$ for $s\geq 0$. Davis\cite{D}, pages 67-78, shows 
that general solution of \eqref{Riccati-eq} is given by 
\begin{equation} \label{w-tilde}
l_{\varepsilon}(t)= \frac{-4isw_1'(s)}{tw_1(s)}
\end{equation} 
with arbitrary constants $a$ and $b$ which are not $\mathbb{R}$-related. 
If we use $Y_{\frac14}=J_{\frac14}-\sqrt{2}J_{-\frac14}$ and set  
$\kappa= (a+b)/(\sqrt{2}b)\in (\mathbb{C}\cup\{\infty\})
\setminus\mathbb{R}$, the right hand side of \eqref{w-tilde} 
becomes $l_{\varepsilon}(t,\kappa)$ of \eqref{w}. Lemma~\ref{hoho} implies part (1). 

To prove part (2) we use the recurrence formula of Bessel functions 
(see (10.6.5) in DLMF\cite{DLMF}):
\begin{lemma} Let $\mathcal{C}_\nu(z)$ be any of $J_\nu(z), Y_\nu(z), H_{\nu}^{(1)}(z), 
H_{\nu}^{(2)}(z)$ or any nontrivial linear combination of these functions, 
the coefficients in which are independent of $z$ and $\nu$. 
Define 
$f_\nu(z)=z^p \mathcal{C}_\nu(\lambda{z}^q)$, where $p,q$, and $\lambda\not=0$ 
are real or complex constants, then
\begin{equation} \label{Bessel}
zf_\nu'(z)=\lambda q z^q f_{\nu-1}(z)+(p-{\nu}q)f_\nu(z)
\end{equation} 
as long as the principal branch is considered for $\mathcal{C}_\nu(z)$, 
\end{lemma} 

\noindent 
Conside $f_\nu(z)$ for $\mathcal{C}_\nu(z)=aJ_\nu(z)+ bY_\nu(z)$ with 
\[
p=\frac18, \quad q=\frac12, \quad \nu=\frac14, \quad \lambda=2.
\]
Then $\lambda{q}=1, \ p-{\nu}q=0$ and \eqref{Bessel} implies that 
for $w_1(s)$ of \eqref{w-tilde} we have   
\begin{equation} \label{gather-1}
l_\varepsilon(t)= 
\frac{-4i}{t} \frac{s\big(s^{\frac18}\mathcal{C}_{\frac14}(2\sqrt{s})\big)'}
{s^\frac18 \mathcal{C}_{\frac14}(2\sqrt{s})}\, 
=\frac{-4i}{t}\frac{s^{\frac58}\mathcal{C}_{-\frac34}(2\sqrt{s})}
{s^\frac18 \mathcal{C}_{\frac14}(2\sqrt{s})} \, .
\end{equation} 
In the right hand side of \eqref{gather-1} substitute 
\begin{gather*}
\mathcal{C}_{\frac14}(z)= a J_{\frac14}(z) + b Y_{\frac14}(z)= 
(a+b)J_{\frac14}(z)- \sqrt{2} b J_{-\frac14}(z),\\
\mathcal{C}_{-\frac34}(z)= a J_{-\frac34}(z) + b Y_{-\frac34}(z)= 
(a+b)J_{-\frac34}(z) + \sqrt{2}b J_{\frac34}(z)
\end{gather*}
with $z=2\sqrt{s}$ and reduce by the common factor $\sqrt{2}b$. 
In the denominator we have  
\begin{equation} \label{numerator}
\frac{s^\frac18}{\sqrt{2}b} \mathcal{C}_{\frac14}(2\sqrt{s})
= s^{1/8}
\big(- J_{-\frac14}(2\sqrt{s}) +{\kappa} J_{\frac14}(2\sqrt{s})\big)= w(s,\kappa)
\end{equation} 
and in the numerator  
\begin{equation} \label{denomi}
\frac{s^{\frac58}}{\sqrt{2}b}\mathcal{C}_{-\frac34}(2\sqrt{s})
= s^{\frac58}\big({\kappa}J_{-\frac34}(z) + J_{\frac34}(z)\big)
= \frac{\sqrt{\varepsilon}\kappa t}{2} M_{-\frac34}(s) + s M_{\frac34}(s). 
\end{equation} 
Plugging these in \eqref{gather-1} we obtain \eqref{ls}. 
If we use \eqref{M1/4} and the last expression in \eqref{denomi} 
we obtain \eqref{lt} which manifests that $l_\varepsilon(t,\kappa)$ 
is a meromorphic function of $t$. 

To prove part (3), we use \eqref{lt}. The numerator has an asymptotic expansion 
\begin{equation*}
-i(8\kappa \varepsilon^{\frac12}\Gamma(1/4)^{-1}+ O(\varepsilon^2 t^3))
\end{equation*}
 and the denominator  has an asymptotic expansion
\begin{equation*}
2(-2\Gamma(3/4)^{-1} + \kappa \varepsilon^{\frac12} t \Gamma(5/4)^{-1}+ O(\varepsilon^2 t^4))
\end{equation*} 
as $t \to 0$, 
hence 
\begin{align*}
l_\varepsilon(t,\kappa)& = \frac{-i8\kappa \varepsilon^{\frac12}\Gamma(1/4)^{-1}+ O(\varepsilon^2 t^3)}
{2(-2\Gamma(3/4)^{-1} + \kappa \varepsilon^{\frac12} t \Gamma(5/4)^{-1}+ O(\varepsilon^2 t^4))} \\
&= \Bigl(\frac{2i\kappa\varepsilon^{\frac12}\Gamma(3/4)}{\Gamma(1/4)}+ O(\varepsilon^2 t^3)\Bigr)
\Bigl(1-\frac{2\kappa\varepsilon^{\frac12} t \Gamma(3/4)}{\Gamma(1/4)}+ O(\varepsilon^2 t^4)\Bigr)^{-1} \\
& = \frac{ia \varepsilon^{\frac12}}
{1-a\varepsilon^{\frac12}t + O(\varepsilon^2 t^4)}+ O({\varepsilon}^{2}t^3), \quad 
a= \frac{2\kappa \Gamma(3/4)}{\Gamma(1/4)}.
\end{align*}
Statement (3) follows. 
\end{proof} 

\noindent
\textbf{The initial condition.}
Having obtained the general solution $l_{\varepsilon}(t,\kappa)$ of 
\eqref{Riccati-eq}, we need to determine $\kappa=\kappa_\varepsilon$ such that 
the initial condition $l_{\varepsilon}(-1/\varepsilon,\kappa_\varepsilon)= 1$ 
is satisfied. We define 
\[
l_{\varepsilon}^\ast(t)= l_{\varepsilon}(t,\kappa_\varepsilon) \quad 
\mbox{and} \quad 
\tilde{l}_{\varepsilon}(t)=-l_{\varepsilon}^\ast(-t).
\]
We have introduced $\tilde{l}_{\varepsilon}(t)$ as we want to deal with 
a positive variable. Then \eqref{lt} implies for $t>0$ that  
\begin{align}
\tilde{l}_{\varepsilon}(t)
&= 
\frac{-i\big(8{(-\kappa_\varepsilon)}\varepsilon^{\frac12}M_{-\frac34}({s}) + 
\varepsilon^2 t^3M_{\frac34}({s})\big)}
{2\big(- 2M_{-\frac14}(s)+{(-\kappa_\varepsilon)}\varepsilon^\frac12 t M_{\frac14}({s})\big)} 
\notag \\
&=\frac{-4i s^\frac12
\big({-\kappa_\varepsilon}J_{-\frac34}(2\sqrt{s}) + J_{\frac34}(2\sqrt{s})\big)}
{t(- J_{-\frac14}(2\sqrt{s}) -{\kappa_\varepsilon} J_{\frac14}(2\sqrt{s}))} 
\notag \\%
& = l_{\varepsilon}(t,-\kappa_\varepsilon) 
={ - i \frac{\tilde{w}'_{\varepsilon}(t, -\kappa_\varepsilon )}
{\tilde{w}_{\varepsilon}(t, -\kappa_\varepsilon)}\,.}
\label{tldast}
\end{align} 
Thus, $\tilde{l}_{\varepsilon}(1/\varepsilon)=-1$ is satisfied if (and only if)  
\[
\left. -1 = 
\frac{-4i s^\frac12
\big({-\kappa_\varepsilon}J_{-\frac34}(2\sqrt{s}) + J_{\frac34}(2\sqrt{s})\big)}
{t(- J_{-\frac14}(2\sqrt{s}) -{\kappa_\varepsilon} J_{\frac14}(2\sqrt{s}))}\right\vert_{t=1/\varepsilon} 
=
\frac{i\big(-{\kappa_\varepsilon}J_{-\frac34}(1/2\varepsilon) + J_{\frac34}(1/2\varepsilon)\big)}
{J_{-\frac14}(1/2\varepsilon) +{\kappa_\varepsilon} J_{\frac14}(1/2\varepsilon)}
\]
where we used $2\sqrt{s}=1/(2\varepsilon)$ when $t=1/\varepsilon$ in the  last 
expression. Solving this equation for $\kappa_\varepsilon$ leads to 
\begin{equation} 
\kappa_\varepsilon = - \left.
\frac{J_{-\frac14}+ i J_{\frac34}}
{J_{\frac14}-i J_{-\frac34}}\right|_{\frac1{2\varepsilon}}
\label{asymp}
\end{equation} 

We recall the following special case of (10.17.3) in DLMF\cite{DLMF}.
\begin{lemma} 
Assume $x$ real and let $\omega=x-\frac12 \nu\pi- \frac{\pi}{4}$ . Then as $x \to \infty$
\begin{equation} \label{Bessel-asymptotic}
J_\nu(x)= \sqrt{\frac{2}{\pi{x}}}
\Bigl(
\cos\, \omega -  
\frac{(4\nu^2-1)}{8x}\sin\, \omega  + 
O\bigl(\frac1{x^2}\bigr)
\Bigr)\,.
\end{equation} 	
\end{lemma}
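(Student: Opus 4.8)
The assertion is (10.17.3) of DLMF\cite{DLMF} restricted to real argument and truncated after the second term, so one option is simply to quote it. For a self-contained argument the plan is to pass to the normal form of Bessel's equation, treat it as an $L^1$-perturbation of $u''+u=0$, and combine a Volterra-iteration argument (which controls the shape of the expansion) with an integral representation (which pins down the two free constants).

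First I would put $u(x)=\sqrt{x}\,J_\nu(x)$; then Bessel's equation becomes
\begin{equation*}
u''(x)+\Bigl(1-\frac{\nu^2-\tfrac14}{x^2}\Bigr)u(x)=0 .
\end{equation*}
Since $q(x):=(\nu^2-\tfrac14)x^{-2}\in L^1([1,\infty))$, I would recast this, via the Green's function $\sin(x-s)$ for $d^2/dx^2+1$, as the Volterra integral equation
\begin{equation*}
u(x)=\alpha\cos x+\beta\sin x-\int_x^\infty \sin(x-s)\,q(s)\,u(s)\,ds
\end{equation*}
for appropriate $\nu$-dependent constants $\alpha,\beta$, and show by Picard iteration that for large $x$ the Neumann series converges and gives $u(x)=\alpha\cos x+\beta\sin x+O(1/x)$. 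Substituting this leading term back into the integral, expanding $\sin(x-s)=\sin x\cos s-\cos x\sin s$ and splitting $\cos^2 s$, $\sin^2 s$, $\cos s\sin s$ into their mean values and purely oscillatory parts, the mean values produce a genuine $1/x$ contribution (from $\int_x^\infty s^{-2}\,ds=1/x$) while the oscillatory parts are $O(1/x^2)$ after one integration by parts. The upshot is
\begin{equation*}
u(x)=\alpha\cos x+\beta\sin x-\frac{\nu^2-\tfrac14}{2x}\bigl(\alpha\sin x-\beta\cos x\bigr)+O(1/x^2).
\end{equation*}

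It remains to identify $\alpha$ and $\beta$. The normal-form ODE together with boundedness fixes the expansion only up to the choice of solution in the two-dimensional solution space — equivalently, up to an overall amplitude and phase — and connecting this to the power-series normalization of $J_\nu$ at $x=0$ is a global connection problem; this is the one genuinely non-routine point. I would resolve it with Poisson's integral
\begin{equation*}
J_\nu(x)=\frac{2\,(x/2)^\nu}{\sqrt{\pi}\,\Gamma(\nu+\tfrac12)}\int_0^1(1-t^2)^{\nu-\frac12}\cos(xt)\,dt ,\qquad \real\nu>-\tfrac12 ,
\end{equation*}
whose leading large-$x$ behaviour comes from the endpoint $t=1$ (set $t=1-\tau$ and apply the standard endpoint, i.e.\ generalized Watson-lemma, estimate), giving $J_\nu(x)=\sqrt{2/\pi x}\,\cos\omega+O(x^{-3/2})$ with $\omega=x-\tfrac12\nu\pi-\tfrac\pi4$, hence $\alpha=\sqrt{2/\pi}\cos(\tfrac12\nu\pi+\tfrac\pi4)$ and $\beta=\sqrt{2/\pi}\sin(\tfrac12\nu\pi+\tfrac\pi4)$; the constraint $\real\nu>-\tfrac12$ is then removed either by the recurrence $J_{\nu-1}(x)+J_{\nu+1}(x)=(2\nu/x)J_\nu(x)$ or by analytic continuation in $\nu$ (the error term being locally uniform in $\nu$). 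Feeding these values into the previous display, using $\alpha\cos x+\beta\sin x=\sqrt{2/\pi}\cos\omega$ and $\alpha\sin x-\beta\cos x=\sqrt{2/\pi}\sin\omega$, dividing by $\sqrt{x}$, and writing $(\nu^2-\tfrac14)/2=(4\nu^2-1)/8$, yields exactly \eqref{Bessel-asymptotic}. An alternative that delivers the constants together with the full asymptotic series at once is a steepest-descent evaluation of the Sommerfeld contour integral for $H_\nu^{(1)}(x)$, combined with $J_\nu=\tfrac12\bigl(H_\nu^{(1)}+H_\nu^{(2)}\bigr)$; I would expect the constant-matching to be the main obstacle in either approach.
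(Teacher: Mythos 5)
The paper does not prove this lemma at all: it simply quotes it as a special case of DLMF (10.17.3). Your first remark — that one option is to cite DLMF — therefore \emph{is} the paper's proof. What you then supply is a genuine self-contained derivation, and it is correct; since the paper offers none, the comparison is between a citation and an actual argument.

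Your derivation is sound and follows a well-trodden path. The Liouville transform $u=\sqrt{x}\,J_\nu$ does bring Bessel's equation to $u''+\bigl(1-(\nu^2-\tfrac14)x^{-2}\bigr)u=0$; the Volterra equation with kernel $\sin(x-s)$ and the Picard iteration do give $u=\alpha\cos x+\beta\sin x+O(1/x)$ for some $\alpha,\beta$ (every solution is bounded at infinity because $q\in L^1$, so this parametrizes the whole solution space); and the back-substitution, splitting the trigonometric products into mean and oscillatory parts, correctly produces the $-(\nu^2-\tfrac14)(2x)^{-1}(\alpha\sin x-\beta\cos x)$ correction with an $O(x^{-2})$ remainder. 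You are also right that the only non-routine step is the connection problem of determining $\alpha,\beta$: the Poisson integral plus an endpoint (Watson-lemma) estimate gives $J_\nu\sim\sqrt{2/\pi x}\cos\omega$ for $\real\nu>-\tfrac12$, and the recurrence or analytic continuation in $\nu$ removes the restriction — which matters here since the paper uses $\nu=-\tfrac34$. The identities $\alpha\cos x+\beta\sin x=\sqrt{2/\pi}\cos\omega$ and $\alpha\sin x-\beta\cos x=\sqrt{2/\pi}\sin\omega$, together with $(\nu^2-\tfrac14)/2=(4\nu^2-1)/8$, then yield \eqref{Bessel-asymptotic} exactly. In short: your proof buys self-containedness and exhibits where each piece of \eqref{Bessel-asymptotic} comes from (the ODE structure gives the \emph{shape} of the two-term expansion, the global representation gives the \emph{constants}), whereas the paper's citation buys brevity. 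Both are legitimate; for the purposes of this paper, where the lemma is a standard fact used as a black box, the citation is the more appropriate choice, but your argument is correct and fills in the proof the paper omits.
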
 
Application of this result to the right hand side of \eqref{asymp} yields  
\begin{equation} \label{gamma} 
\kappa_\varepsilon = - e^{\frac{i\pi}{4}}+ (2\sqrt{2})^{-1} \varepsilon
e^{-i\left(\frac{1}{\varepsilon}+\frac{\pi}{4}\right)}+ 
O(\varepsilon^2)= - e^{\frac{i\pi}{4}} + O(\varepsilon), \quad \varepsilon \to 0.
\end{equation} 
We omit the details.
 
\begin{lemma} \label{negative-t} 
The solution ${l}^\ast_{\varepsilon}(t)$ of the initial value problem for the 
Riccati equation 
\begin{equation} 
{l}^{\ast{'}}_{\varepsilon}(t) + i {l}^\ast_{\varepsilon}(t)^2 = i\varepsilon^2 t^2 , \quad 
{l}^\ast_{\varepsilon}(-1/\varepsilon)=  1 
\end{equation} 
is given by \eqref{ls} with $\kappa$ given by $\kappa_\varepsilon$ of \eqref{asymp}: 
\begin{equation} \label{tlt}
{l}_{\varepsilon}^\ast (t)= \frac{-4i s^\frac12
\big({\kappa_\varepsilon}J_{-\frac34}(2\sqrt{s}) + J_{\frac34}(2\sqrt{s})\big)}
{t(- J_{-\frac14}(2\sqrt{s}) +{\kappa_\varepsilon} J_{\frac14}(2\sqrt{s}))}, \quad 
s= \frac{\varepsilon^2 t^4}{16},
\end{equation}
where the principal branch is assumed for Bessel functions. 
\end{lemma}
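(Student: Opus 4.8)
The statement collects what has already been established above, so the proof is short. The two things that still have to be checked are: (i) that $\kappa_\varepsilon$ of \eqref{asymp} is a \emph{finite} number lying in $\mathbb{C}\setminus\mathbb{R}$, so that it is an admissible parameter for the family $l_\varepsilon(\cdot,\kappa)$; and (ii) that with this $\kappa_\varepsilon$ the formula \eqref{tlt} really does meet the prescribed initial condition $l^\ast_\varepsilon(-1/\varepsilon)=1$. Everything else is an appeal to results already proved.

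For (i), abbreviate $z=1/(2\varepsilon)$ and $A=J_{-\frac14}(z)$, $B=J_{\frac34}(z)$, $C=J_{\frac14}(z)$, $D=J_{-\frac34}(z)$ (all real). I would invoke the Bessel cross-product identity $CB+AD=\sqrt2/(\pi z)$, i.e.\ the case $\nu=-\tfrac34$ of $J_{\nu+1}(z)J_{-\nu}(z)+J_\nu(z)J_{-\nu-1}(z)=-2\sin(\nu\pi)/(\pi z)$ (which follows from the Wronskian of $J_\nu,J_{-\nu}$ and the recurrence relations; see DLMF \cite{DLMF}, \S10.5--10.6). Its right-hand side is strictly positive for $z>0$, so $C$ and $D$ cannot vanish simultaneously; hence the denominator $C-iD$ of \eqref{asymp} is nonzero and $\kappa_\varepsilon$ is well defined and finite. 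Rationalizing \eqref{asymp} gives $\kappa_\varepsilon=\bigl((BD-AC)-i(AD+BC)\bigr)/(C^2+D^2)$, so $\mathrm{Im}\,\kappa_\varepsilon=-(AD+BC)/(C^2+D^2)<0$, whence $\kappa_\varepsilon\notin\mathbb{R}$ (and therefore also $-\kappa_\varepsilon\notin\mathbb{R}$). Alternatively one may simply quote \eqref{gamma}, which already gives $\kappa_\varepsilon=-e^{i\pi/4}+O(\varepsilon)$.

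Granting (i), the rest is assembly. By the general solution formula \eqref{w} (equivalently \eqref{ls}, \eqref{lt}) with the admissible parameter $\kappa_\varepsilon$, the function $l_\varepsilon(\cdot,\kappa_\varepsilon)=-i\tilde{w}_\varepsilon'(\cdot,\kappa_\varepsilon)/\tilde{w}_\varepsilon(\cdot,\kappa_\varepsilon)$ solves the Riccati equation \eqref{Riccati-eq}, and by Lemma~\ref{hoho} it is holomorphic on a complex neighborhood of $\mathbb{R}$, hence a genuine global solution that never escapes to infinity. For (ii) I would use the relation $\tilde{l}_\varepsilon(t)=-l^\ast_\varepsilon(-t)=l_\varepsilon(t,-\kappa_\varepsilon)$ recorded in \eqref{tldast}: since $\tilde{w}_\varepsilon(1/\varepsilon,-\kappa_\varepsilon)\neq0$ (Lemma~\ref{hoho}, as $-\kappa_\varepsilon\notin\mathbb{R}$) the value $\tilde{l}_\varepsilon(1/\varepsilon)$ is finite, and $\kappa_\varepsilon$ of \eqref{asymp} is by its very construction the unique solution of $\tilde{l}_\varepsilon(1/\varepsilon)=-1$; hence $l^\ast_\varepsilon(-1/\varepsilon)=-\tilde{l}_\varepsilon(1/\varepsilon)=1$. (One may instead substitute \eqref{asymp} into the polynomial form \eqref{lt} at $t=-1/\varepsilon$, where $s=1/(16\varepsilon^2)$ and $M_\nu(s)=4^\nu\varepsilon^\nu J_\nu(z)$; the powers of $\varepsilon$ cancel, the expression collapses to $i(\kappa_\varepsilon D-B)/(A+\kappa_\varepsilon C)$, and numerator and denominator both reduce to $-i(AD+BC)/(C-iD)\neq0$, so the quotient is $1$.) Finally, the Riccati initial value problem has at most one solution, because the right side $-il^2+i\varepsilon^2t^2$ is a polynomial in $(t,l)$, so Picard--Lindel\"of gives local uniqueness, and the solution just produced does not blow up; thus $l^\ast_\varepsilon=l_\varepsilon(\cdot,\kappa_\varepsilon)$, which is \eqref{tlt}. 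The one step that needs actual work is (i), via the cross-product identity; and one should keep in mind that \eqref{ls}/\eqref{tlt}, read with principal branches, coincides with the polynomial form \eqref{lt} transparently only for $t>0$, so the initial-condition check at $t=-1/\varepsilon<0$ is cleanest done via \eqref{lt} or via $\tilde{l}_\varepsilon$ as above.
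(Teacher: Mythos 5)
Your proposal is correct and follows the same route as the paper: the paper effectively proves this lemma in the text preceding its statement, by reducing the condition $l^{\ast}_{\varepsilon}(-1/\varepsilon)=1$ to $\tilde l_{\varepsilon}(1/\varepsilon)=-1$ via \eqref{tldast} and solving the resulting M\"obius equation for $\kappa_\varepsilon$, which yields \eqref{asymp}. You supply two checks the paper leaves implicit, both done correctly: that $\kappa_\varepsilon$ is finite and non-real (via the cross-product form $J_{1/4}J_{3/4}+J_{-1/4}J_{-3/4}=\sqrt{2}/(\pi z)>0$ of the Wronskian $\mathscr{W}\{J_\nu,J_{-\nu}\}$), which is exactly the admissibility condition required by Lemma~\ref{hoho}; and uniqueness of the Riccati IVP solution, which you settle by combining local Picard--Lindel\"of with the global holomorphy of $l_\varepsilon(\cdot,\kappa_\varepsilon)$ on a real neighborhood coming from the non-vanishing of $\tilde w_\varepsilon$. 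Your closing caveat is also well taken: \eqref{tlt} with principal branches reproduces the meromorphic solution \eqref{lt} transparently only for $t>0$ (the step $s^{1/4}=\sqrt{\varepsilon}\,t/2$ uses $t>0$), so verifying the initial condition at the negative time $-1/\varepsilon$ really should go through \eqref{lt} or through $\tilde l_\varepsilon$, as both you and the paper do.
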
 

\noindent
\textbf{Asymptotic behavior of ${l}^\ast_{\varepsilon}(1/\varepsilon)$ as $\varepsilon \to 0$.}  
\begin{lemma} As $\varepsilon \to 0$, we have 
\begin{equation} \label{llep}
l^\ast_\varepsilon (1/\varepsilon)= \frac{1-2\sqrt{2}i\cos(1/\varepsilon)}{3+2\sqrt{2}\sin(1/\varepsilon)} 
+ O(\varepsilon)
\end{equation} 
and $\real l^\ast_\varepsilon(1/\varepsilon)$ oscillates between $(3+2\sqrt{2})^{-1}$ and 
$(3-2\sqrt{2})^{-1}$ as $\varepsilon \to 0$. 
\end{lemma}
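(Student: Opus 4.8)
The plan is to take the closed form \eqref{tlt} for $l^{\ast}_{\varepsilon}(t)$, set $t=1/\varepsilon$, and feed in the large-argument Bessel asymptotics \eqref{Bessel-asymptotic} together with the expansion \eqref{gamma} of $\kappa_{\varepsilon}$. At $t=1/\varepsilon$ one has $s=\varepsilon^{2}t^{4}/16=1/(16\varepsilon^{2})$, so $2\sqrt{s}=1/(2\varepsilon)$ (abbreviate $y:=1/(2\varepsilon)$) and the algebraic prefactor collapses, $-4is^{1/2}/t=-i$. Hence
\begin{equation*}
l^{\ast}_{\varepsilon}(1/\varepsilon)= -\,i\,
\frac{\kappa_{\varepsilon}J_{-\frac34}(y)+J_{\frac34}(y)}
{-J_{-\frac14}(y)+\kappa_{\varepsilon}J_{\frac14}(y)}\,, \qquad y=\tfrac{1}{2\varepsilon}.
\end{equation*}
Into this I insert $J_{\nu}(y)=\sqrt{2/(\pi y)}\,\bigl(\cos(y-\tfrac{\nu\pi}{2}-\tfrac{\pi}{4})+O(\varepsilon)\bigr)$ — the common factor $\sqrt{2/(\pi y)}$ cancels between numerator and denominator — and $\kappa_{\varepsilon}=-e^{i\pi/4}+O(\varepsilon)$, and expand each cosine in $\cos y$ and $\sin y$. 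This produces a leading term of the shape $i(\alpha\sin y-\beta\cos y)/(\alpha\cos y+\beta\sin y)$ with the fixed constants $\alpha=\cos\tfrac{\pi}{8}+e^{i\pi/4}\sin\tfrac{\pi}{8}$ and $\beta=\sin\tfrac{\pi}{8}+e^{i\pi/4}\cos\tfrac{\pi}{8}$.

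Next I simplify this ratio. Multiplying numerator and denominator by $\bar{\alpha}\cos y+\bar{\beta}\sin y$ and using $|\alpha|^{2}=|\beta|^{2}=\tfrac32$, $\real(\alpha\bar{\beta})=\sqrt2$ and $\operatorname{Im}(\alpha\bar{\beta})=-\tfrac12$ (all elementary consequences of $2\sin\tfrac{\pi}{8}\cos\tfrac{\pi}{8}=\cos^{2}\tfrac{\pi}{8}-\sin^{2}\tfrac{\pi}{8}=\tfrac{1}{\sqrt2}$), together with the double-angle formulas and $2y=1/\varepsilon$, the denominator reduces to the real quantity $\tfrac32+\sqrt2\sin(1/\varepsilon)$ and the numerator to $\tfrac12-\sqrt2\,i\cos(1/\varepsilon)$; cancelling the common factor $2$ gives exactly $(1-2\sqrt2\,i\cos(1/\varepsilon))/(3+2\sqrt2\sin(1/\varepsilon))$, which is the main term in \eqref{llep}.

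The step requiring real care — and the one I expect to be the main obstacle — is to justify that the error is genuinely $O(\varepsilon)$; this reduces to a uniform lower bound $|\alpha\cos y+\beta\sin y|\geq c>0$ for $y\in\mathbb{R}$. Here the non-realness of the limit $-e^{i\pi/4}$ of $\kappa_{\varepsilon}$ enters (in the same spirit as the non-vanishing in Lemma~\ref{hoho}): because $\operatorname{Im}(\alpha\bar{\beta})=-\tfrac12\neq0$, the quotient $\alpha/\beta$ is not real, so $\alpha\cos y+\beta\sin y$ never vanishes, and being continuous and $2\pi$-periodic in $y$ it is bounded below by a positive constant. Granting this, $l^{\ast}_{\varepsilon}(1/\varepsilon)$ has the form $(N_{0}+O(\varepsilon))/(D_{0}+O(\varepsilon))$ with $N_{0},D_{0}$ bounded and $|D_{0}|\geq c>0$, hence equals $N_{0}/D_{0}+O(\varepsilon)$, which is \eqref{llep}.

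For the oscillation statement, note that $3-2\sqrt2>0$ and that the denominator in \eqref{llep} is real, so $\real l^{\ast}_{\varepsilon}(1/\varepsilon)=(3+2\sqrt2\sin(1/\varepsilon))^{-1}+O(\varepsilon)$. As $\varepsilon\to0^{+}$ the quantity $1/\varepsilon$ runs continuously through all large reals, so $\sin(1/\varepsilon)$ attains every value in $[-1,1]$ for arbitrarily small $\varepsilon$; therefore $(3+2\sqrt2\sin(1/\varepsilon))^{-1}$ comes arbitrarily close to both $(3+2\sqrt2)^{-1}$ and $(3-2\sqrt2)^{-1}$, and the set of limit points of $\real l^{\ast}_{\varepsilon}(1/\varepsilon)$ as $\varepsilon\to0$ is exactly the closed interval between these two values.
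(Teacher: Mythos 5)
Your proof is correct and takes essentially the same route as the paper: both start from the explicit formula \eqref{tlt} at $t=1/\varepsilon$, insert $\kappa_\varepsilon$ from \eqref{asymp}, apply the large-argument Bessel asymptotics \eqref{Bessel-asymptotic}, and simplify. The only cosmetic difference is the order of operations — the paper first substitutes the exact expression \eqref{asymp} for $\kappa_\varepsilon$ into \eqref{323}, clears denominators to obtain the symmetric bilinear form \eqref{asymp-3}, and only then applies \eqref{Bessel-asymptotic}, whereas you replace $\kappa_\varepsilon$ by its leading term $-e^{i\pi/4}$ immediately and simplify the resulting trigonometric ratio; the paper states the final step as ``we may compute the asymptotic value ... the lemma follows,'' while you make the denominator lower bound $|\alpha\cos y+\beta\sin y|^2=\tfrac32+\sqrt2\sin 2y\geq \tfrac32-\sqrt2>0$ explicit, which is the detail the paper leaves implicit. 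No gap.
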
 
\begin{proof} 
From \eqref{tlt} we have  
\begin{equation} \label{323}
\left.
l^\ast_\varepsilon ({1}/{\varepsilon})= 
\frac{-i \big({\kappa_\varepsilon}J_{-\frac34} + J_{\frac34}\big)}
{-J_{-\frac14} +{\kappa_\varepsilon} J_{\frac14}}
\right|_{\frac1{2\varepsilon}}, \quad 
\kappa_\varepsilon = - \left. \frac{J_{-\frac14}+ i J_{\frac34}}
{J_{\frac14}-i J_{-\frac34}}\right|_{\frac{1}{2\varepsilon}}.  
\end{equation} 
Thus, 
\begin{equation} \label{asymp-3}
\left. l^\ast_\varepsilon ({1}/{\varepsilon})= 
\frac{2J_{\frac34}J_{-\frac34}+ 
i\bigl(J_{\frac34}J_{\frac14}-J_{-\frac14}J_{-\frac34}\bigr)}
{2J_{-\frac14}J_{\frac14}+ 
i\bigl(J_{\frac14}J_{\frac34} -J_{-\frac14}J_{-\frac34}\bigr)
}\right|_{ \frac1{2\varepsilon}}
\end{equation} 
and we may compute the asymptotic value of \eqref{asymp-3} 
as $\varepsilon \to 0$ by applying once more \eqref{Bessel-asymptotic}. 
This yields \eqref{llep} and the 
lemma follows. 
\end{proof} 

\noindent
\textbf{The amplitude function $m_{\varepsilon}(t)$.} 
We next solve initial value problem \eqref{m} associated with 
$l^\ast_{\varepsilon}(t)$ which reads 
\[
\frac{{m}'_{\varepsilon}(t)}{{m}_{\varepsilon}(t)}= \frac{l_{\varepsilon}^\ast (t)}{2i}, \quad 
m_\varepsilon(-1/{\varepsilon})= \pi^{-1/4}.
\] 
For the same reason as before, we consider 
$\tilde{m}_{\varepsilon}(t)= m_{\varepsilon}(-t)$. The expression \eqref{tldast} 
for $\tilde{l}_\varepsilon(t)$ implies  
\begin{equation} \label{tmderi}
\frac{\tilde{m}_{\varepsilon}'(t)}{\tilde{m}_{\varepsilon}(t)}= 
- \frac{{m}_{\varepsilon}'(-t)}{{m}_{\varepsilon}(-t)}
= -\frac{l^\ast_\varepsilon(-t)}{2i}
= \frac{\tilde{l}_\varepsilon(t)}{2i}
= - \frac{\tilde{w}_{\varepsilon}'(t,-\kappa_\varepsilon)}{2 \tilde{w}_{\varepsilon}(t,-\kappa_\varepsilon)}.
\end{equation} 
Recall \eqref{w-def}  and \eqref{winM} for the definition of 
$\tilde{w}_{\varepsilon}(t,\kappa)$.
Integrating \eqref{tmderi} yields 
$\tilde{m}_\varepsilon(t)= A_\varepsilon \tilde{w}_{\varepsilon}(t,-\kappa_\varepsilon)^{-1/2}$ for a constant 
$A_\varepsilon$ for $t>0$, viz.  
\begin{align} \label{m-sol-negative} 
{m}_{\varepsilon}(- t) & 
= A_\varepsilon \big(-s^{1/8}J_{-\frac14}(2\sqrt{s}) -{\kappa_\varepsilon}s^{1/8}
J_{\frac14}(2\sqrt{s})\big)^{-1/2} \\
&  = A_\varepsilon 
\big(- M_{-\frac14}({s}) -{\kappa_\varepsilon} t M_{\frac14}({s})\big)^{-1/2}. 
\label{m-sol-negative-1}
\end{align} 
Thus the initial condition 
${m}_{\varepsilon}(- 1/\varepsilon)= \pi^{-\frac14}$ is satisfied if  
\begin{equation}\label{nm}
\pi^{-\frac14}=A_\varepsilon \big(-s^{1/8}J_{-\frac14}(2\sqrt{s}) -{\kappa_\varepsilon}s^{1/8}
J_{\frac14}(2\sqrt{s})\big)^{-1/2}\vert_{t=1/\varepsilon}. 
\end{equation} 
By virtue of 
\eqref{Bessel-asymptotic} and \eqref{gamma}, 
$(\cdots)$ on the right hand side is equal to (with $\alpha=\frac1{2\varepsilon}-\frac{\pi}4$)
\[ \frac{2^{\frac12}\varepsilon^{\frac14}}
{\pi^{\frac12}}
\left(
-\cos\left(\alpha+\frac{\pi}{8}\right)
+ e^{\frac{i\pi}{4}}
\cos\left(\alpha-\frac{\pi}{8}\right) +O(\varepsilon)\right) 
= \frac{\varepsilon^{\frac14}}
{\pi^{\frac12}}e^{-i\left(\frac{1}{2\varepsilon}- \frac{7\pi}{8}\right)}+O(\varepsilon^\frac54),
\]
and  we have 
\begin{equation} \label{Aep}
A_\varepsilon = \frac{\varepsilon^{\frac18}}{\pi^\frac12}
e^{-i\left(\frac{1}{4\varepsilon}- \frac{7\pi}{16}\right)}(1+ O(\varepsilon)).
\end{equation} 
\eqref{m-sol-negative-1} implies that $m_\varepsilon(t)$ is given by 
changing $\kappa_\varepsilon$ to $-\kappa_{\varepsilon}$ in the right hand side of  \eqref{m-sol-negative} 
or \eqref{m-sol-negative-1}. This proves the first statement of 
the following lemma.

\begin{lemma} {\rm (1)} 
The solution of the initial value problem \eqref{m} associated with 
$l^\ast_{\varepsilon}(t)$ is given by 
\begin{equation} \label{sol-m}
m_\varepsilon(t) = A_\varepsilon \big(-s^{1/8}J_{-\frac14}(2\sqrt{s}) + {\kappa_\varepsilon}s^{1/8}
J_{\frac14}(2\sqrt{s})\big)^{-1/2},
\end{equation} 
where $\kappa_\varepsilon$ and $A_\varepsilon$ are asymptotically given by \eqref{gamma} 
and \eqref{Aep} respectively and the branch of the square root should be 
chosen such that $m_\varepsilon(-1/\varepsilon)= \pi^{-\frac14}$. 

\noindent
{\rm (2)} As $\varepsilon \to 0$, 
\begin{equation} \label{m-asymptotic}
m_\varepsilon(1/{\varepsilon})= 
\frac{\pi^{-1/4}}{(\sqrt{2}e^{i/\varepsilon}+ i)^{1/2}} + O(\varepsilon),
\end{equation} 
where the branch of the square root should be chosen by the continuity.
\end{lemma}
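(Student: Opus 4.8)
Part~(1) is exactly the computation carried out just above the lemma, so only the asymptotic formula \eqref{m-asymptotic} remains. The plan is to evaluate the explicit expression \eqref{sol-m} at $t=1/\varepsilon$ and substitute the asymptotics already established for the Bessel functions, for $\kappa_\varepsilon$, and for $A_\varepsilon$. First I would record the numerology at $t=1/\varepsilon$: there $s=\varepsilon^2t^4/16=1/(16\varepsilon^2)$, hence $2\sqrt{s}=1/(2\varepsilon)$ and $s^{1/8}=2^{-1/2}\varepsilon^{-1/4}$, so that, writing $\alpha=\tfrac{1}{2\varepsilon}-\tfrac{\pi}{4}$ as before,
\[
m_\varepsilon(1/\varepsilon)=A_\varepsilon\Bigl(2^{-1/2}\varepsilon^{-1/4}\bigl(-J_{-\frac14}(\tfrac1{2\varepsilon})+\kappa_\varepsilon J_{\frac14}(\tfrac1{2\varepsilon})\bigr)\Bigr)^{-1/2}.
\]

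Next I would feed in the Bessel expansion \eqref{Bessel-asymptotic} with $x=1/(2\varepsilon)$ and $\nu=\pm\frac14$, giving $J_{-\frac14}(\tfrac1{2\varepsilon})=\tfrac{2\varepsilon^{1/2}}{\pi^{1/2}}(\cos(\alpha+\tfrac{\pi}{8})+O(\varepsilon))$ and $J_{\frac14}(\tfrac1{2\varepsilon})=\tfrac{2\varepsilon^{1/2}}{\pi^{1/2}}(\cos(\alpha-\tfrac{\pi}{8})+O(\varepsilon))$, together with $\kappa_\varepsilon=-e^{i\pi/4}+O(\varepsilon)$ from \eqref{gamma}. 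This turns the bracket in \eqref{sol-m} (at $t=1/\varepsilon$) into
\[
-s^{1/8}J_{-\frac14}+\kappa_\varepsilon s^{1/8}J_{\frac14}
=\frac{2^{1/2}\varepsilon^{1/4}}{\pi^{1/2}}\Bigl(-\cos(\alpha+\tfrac{\pi}{8})-e^{i\pi/4}\cos(\alpha-\tfrac{\pi}{8})\Bigr)+O(\varepsilon^{5/4}),
\]
and passing to exponentials I would simplify the coefficient to $-e^{i\pi/8}\bigl(e^{i\alpha}+\tfrac{1}{\sqrt2}e^{-i\alpha}\bigr)$, using $e^{-i\pi/8}+e^{3i\pi/8}=\sqrt2\,e^{i\pi/8}$. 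Then I would form $m_\varepsilon(1/\varepsilon)^{-2}=A_\varepsilon^{-2}\bigl(-s^{1/8}J_{-\frac14}+\kappa_\varepsilon s^{1/8}J_{\frac14}\bigr)$ with $A_\varepsilon$ from \eqref{Aep}, collect the phases ($e^{i(\frac1{2\varepsilon}-\frac{7\pi}{8})}(-e^{i\pi/8})=e^{i(\frac1{2\varepsilon}+\frac{\pi}{4})}$, after which $e^{i(\frac1{2\varepsilon}+\frac{\pi}{4})}e^{i\alpha}=e^{i/\varepsilon}$ and $e^{i(\frac1{2\varepsilon}+\frac{\pi}{4})}\tfrac{1}{\sqrt2}e^{-i\alpha}=\tfrac{i}{\sqrt2}$), and obtain $m_\varepsilon(1/\varepsilon)^{-2}=\pi^{1/2}\bigl(\sqrt2\,e^{i/\varepsilon}+i\bigr)+O(\varepsilon)$. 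I expect the main obstacle here to be the error bookkeeping: the $O(\varepsilon)$ relative errors carried by $\kappa_\varepsilon$, by $A_\varepsilon$, and by the two Bessel expansions (including the second-order term $\propto(4\nu^2-1)/x$) all multiply bounded oscillating factors, and one must check that the leading terms of $A_\varepsilon^{-2}$ and of the bracket combine to an $O(1)$ quantity so that each such error only contributes $O(\varepsilon)$ to the (non-decaying, oscillating) answer.

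Finally, since $\sqrt2-1\le|\sqrt2\,e^{i/\varepsilon}+i|\le\sqrt2+1$, the right-hand side is bounded away from $0$ and $\infty$, so inverting gives $m_\varepsilon(1/\varepsilon)^2=\dfrac{\pi^{-1/2}}{\sqrt2\,e^{i/\varepsilon}+i}+O(\varepsilon)$ and taking a square root yields $m_\varepsilon(1/\varepsilon)=\pm\dfrac{\pi^{-1/4}}{(\sqrt2\,e^{i/\varepsilon}+i)^{1/2}}+O(\varepsilon)$. The sign is dictated by the branch fixed in part~(1): $m_\varepsilon$ solves the linear equation \eqref{m} with $m_\varepsilon(-1/\varepsilon)=\pi^{-1/4}\neq0$, hence is a nonvanishing, real-analytic function of $t$ on the real axis, so $m_\varepsilon(1/\varepsilon)$ — and with it the branch of the square root in \eqref{m-asymptotic} — is determined by continuity, which is precisely the meaning of the phrase in the statement. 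As a consistency check I would note that the identity $|m_\varepsilon(t)|^4=\pi^{-1}\real l_\varepsilon(t)$ together with \eqref{llep} gives $|m_\varepsilon(1/\varepsilon)|^4=\pi^{-1}(3+2\sqrt2\sin(1/\varepsilon))^{-1}+O(\varepsilon)$, which agrees with $|\sqrt2\,e^{i/\varepsilon}+i|^{-2}=(3+2\sqrt2\sin(1/\varepsilon))^{-1}$, confirming the modulus computed above.
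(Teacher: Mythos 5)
Your computation is correct and reaches the stated formula, but you take a slightly more laborious route than the paper. The paper exploits a cancellation that removes $A_\varepsilon$ entirely: squaring the initial condition \eqref{nm} gives the exact identity
$A_\varepsilon^2 = \pi^{-1/2}\bigl(-s^{1/8}J_{-\frac14}-\kappa_\varepsilon s^{1/8}J_{\frac14}\bigr)\big|_{t=1/\varepsilon}$,
and combining this with \eqref{sol-m} yields, after the $s^{1/8}$ factors cancel,
\[
m_\varepsilon(1/\varepsilon)^2
=\pi^{-\frac12}\,
\frac{-J_{-\frac14}-\kappa_\varepsilon J_{\frac14}}{-J_{-\frac14}+\kappa_\varepsilon J_{\frac14}}\bigg|_{z=\frac1{2\varepsilon}}\,,
\]
so only the Bessel asymptotics \eqref{Bessel-asymptotic} and the asymptotics \eqref{gamma} for $\kappa_\varepsilon$ need to be inserted, and the explicit asymptotic expansion \eqref{Aep} for $A_\varepsilon$ is never used. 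You instead substitute \eqref{Aep} for $A_\varepsilon^{-2}$ and then track the resulting oscillating phase factors through the product; this is valid (and your phase bookkeeping $e^{i(\frac1{2\varepsilon}-\frac{7\pi}{8})}(-e^{i\pi/8})e^{i\alpha}=e^{i/\varepsilon}$, $e^{i(\frac1{2\varepsilon}-\frac{7\pi}{8})}(-e^{i\pi/8})\tfrac1{\sqrt2}e^{-i\alpha}=\tfrac{i}{\sqrt2}$ checks out, as does the modulus check via $|\sqrt2\,e^{i/\varepsilon}+i|^2=3+2\sqrt2\sin(1/\varepsilon)$), but it doubles the bookkeeping and reintroduces a source of error that the paper's ratio trick avoids from the start. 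Your treatment of the square-root branch — noting that $m_\varepsilon(t)$ is holomorphic and nonvanishing, so the branch propagates continuously from $m_\varepsilon(-1/\varepsilon)=\pi^{-1/4}$ — is consistent with what the paper means by ``chosen by continuity,'' and your inversion argument correctly uses the lower bound $|\sqrt2\,e^{i/\varepsilon}+i|\ge\sqrt2-1$ to convert $O(\varepsilon)$ additive error under reciprocal and square root. In short: correct, same ingredients, but the paper's presentation is tighter because it cancels $A_\varepsilon$ rather than expanding it.
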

\begin{proof} By virtue of \eqref{nm} and \eqref{sol-m}, 
\[
m_\varepsilon(1/{\varepsilon})^2 = \pi^{-\frac12}
\frac{-J_{-\frac14} - {\kappa_\varepsilon}J_{\frac14}}
{-J_{-\frac14} + {\kappa_\varepsilon}J_{\frac14}}\Big\vert_{\frac1{2\varepsilon}}
\]
and we compute the asymptotic value of the right side by using 
\eqref{Bessel-asymptotic}. We obtain \eqref{m-asymptotic}. 
\end{proof} 

\noindent
\textbf{Asymptotic behavior at $t=0$.} 
In the following section we need $l^\ast_{\varepsilon}(0)$ and $m_\varepsilon(0)$. 
We already computed $l^\ast_{\varepsilon}(0)=ia\varepsilon^{\frac12}$ in \eqref{223-at0} where 
$\kappa$ in the expression for $a$ should be taken as $\kappa=\kappa_\varepsilon$ (see \eqref{gamma}).  The next lemma immediately follows from 
\eqref{m-sol-negative-1} or \eqref{sol-m}. 

\begin{lemma} \label{at0} 
As $t\to 0$,  $m_\varepsilon(t)$ has the following 
asymptotic expansion, uniformly for $0<\varepsilon<1$,
\begin{equation}  
m_\varepsilon(t) =-i A_\varepsilon\Gamma(3/4)^{1/2}
\bigl(1 +\frac{2\Gamma(3/4)}{\Gamma(1/4)}\kappa_\varepsilon t + O(t^2) \bigr),  
\label{m223}
\end{equation} 
where $A_\varepsilon$ and $\kappa_\varepsilon$ are as in \eqref{Aep} and \eqref{gamma},
respectively.  
\end{lemma}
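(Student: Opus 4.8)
\noindent\textbf{Proof strategy for Lemma~\ref{at0}.}
The plan is to read \eqref{m223} off the closed form of $m_\varepsilon$ by a short Taylor expansion near $t=0$. By the remark following \eqref{Aep}, replacing $\kappa_\varepsilon$ by $-\kappa_\varepsilon$ in \eqref{m-sol-negative-1} gives, with $s=\varepsilon^2t^4/16$,
\[
m_\varepsilon(t)=A_\varepsilon\bigl(-M_{-\frac14}(s)+\kappa_\varepsilon t\,M_{\frac14}(s)\bigr)^{-1/2},
\]
which exhibits $m_\varepsilon(t)$ as $A_\varepsilon$ times the $(-1/2)$-power of an entire function of $t$ assembled from the entire functions $M_{\pm\frac14}$ of \eqref{Bessel-expansion}. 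So everything reduces to expanding this bracket at $t=0$ and inverting a square root.

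First I would expand the bracket. From the series in \eqref{Bessel-expansion} one has $M_\nu(s)=1/\Gamma(\nu+1)+O(s)$ uniformly on bounded $s$-sets, and since $s=\varepsilon^2t^4/16$ obeys $0\le s\le t^4/16$ for $0<\varepsilon<1$ while $\kappa_\varepsilon$ is bounded by \eqref{gamma}, the bracket equals, uniformly in $\varepsilon$,
\[
-\frac{1}{\Gamma(3/4)}+\frac{\kappa_\varepsilon}{\Gamma(5/4)}\,t+O(t^2)
=-\frac{1}{\Gamma(3/4)}\Bigl(1-\frac{\Gamma(3/4)}{\Gamma(5/4)}\,\kappa_\varepsilon t+O(t^2)\Bigr).
\]
Then I would insert $\Gamma(5/4)=\tfrac14\Gamma(1/4)$, so that $\Gamma(3/4)/\Gamma(5/4)=4\Gamma(3/4)/\Gamma(1/4)$, apply $(1+x)^{-1/2}=1-\tfrac12x+O(x^2)$, and multiply by $A_\varepsilon$; this yields $m_\varepsilon(t)=A_\varepsilon(-\Gamma(3/4))^{1/2}\bigl(1+\frac{2\Gamma(3/4)}{\Gamma(1/4)}\kappa_\varepsilon t+O(t^2)\bigr)$, which is \eqref{m223} once the square root is identified. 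Uniformity of the remainder for $0<\varepsilon<1$ rests only on $s\le t^4/16$, the boundedness of $M_{\pm\frac14}$ and their derivatives on bounded sets, and the boundedness of $\kappa_\varepsilon$; I would not dwell on it.

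The step that genuinely needs care --- and the one I expect to be the main obstacle --- is fixing the branch, i.e.\ showing that $(-\Gamma(3/4))^{1/2}$ above must be read as $-i\,\Gamma(3/4)^{1/2}$. Recall that integrating \eqref{tmderi} gave $\tilde m_\varepsilon(t)=A_\varepsilon\tilde w_\varepsilon(t,-\kappa_\varepsilon)^{-1/2}$, so that by the parity $\tilde w_\varepsilon(-t,\kappa)=\tilde w_\varepsilon(t,-\kappa)$ (which follows from \eqref{winM}) one has $m_\varepsilon(t)=A_\varepsilon\tilde w_\varepsilon(t,\kappa_\varepsilon)^{-1/2}$; by Lemma~\ref{hoho}, $\tilde w_\varepsilon(\,\cdot\,,\kappa_\varepsilon)$ is entire and has no real zeros, since $\kappa_\varepsilon\notin\mathbb R$ by \eqref{gamma}. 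Hence $m_\varepsilon$ is continuous and nowhere vanishing on $[-1/\varepsilon,0]$, so the branch of $(\cdot)^{-1/2}$ is propagated unambiguously by continuity from $t=-1/\varepsilon$, where $m_\varepsilon(-1/\varepsilon)=\pi^{-1/4}>0$ pins it down. Since the bracket is the negative real number $-1/\Gamma(3/4)$ at $t=0$, its $(-1/2)$-power there is one of $\pm i\,\Gamma(3/4)^{1/2}$; I would settle the sign by following this continuous branch --- the same kind of argument-tracking, via \eqref{Bessel-asymptotic}, that produced \eqref{Aep} and \eqref{m-asymptotic} --- which gives $-i$. With the sign determined, \eqref{m223} follows.
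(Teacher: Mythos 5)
Your proposal follows the paper's route exactly: the paper's entire argument for Lemma~\ref{at0} is the one-liner ``the next lemma immediately follows from \eqref{m-sol-negative-1} or \eqref{sol-m}'', and your Taylor expansion of $m_\varepsilon(t)=A_\varepsilon\bigl(-M_{-\frac14}(s)+\kappa_\varepsilon t\,M_{\frac14}(s)\bigr)^{-1/2}$ at $t=0$, together with the continuity argument via Lemma~\ref{hoho} that fixes the branch of the inverse square root, is precisely the calculation the paper elides, carried out at least as carefully as the paper itself. A caveat concerning the source rather than your argument: \eqref{m-sol-negative-1} as printed disagrees with \eqref{m-sol-negative}, since $s^{1/8}J_{\frac14}(2\sqrt{s})=s^{1/4}M_{\frac14}(s)=(\sqrt{\varepsilon}\,t/2)\,M_{\frac14}(s)$, so a factor $\sqrt{\varepsilon}/2$ has been dropped; consequently the linear coefficient in \eqref{m223} should read $\frac{\Gamma(3/4)}{\Gamma(1/4)}\kappa_\varepsilon\sqrt{\varepsilon}$ (i.e.\ $a_\varepsilon\sqrt{\varepsilon}/2$) rather than $\frac{2\Gamma(3/4)}{\Gamma(1/4)}\kappa_\varepsilon$, as one can also check directly from $m_\varepsilon'(0)/m_\varepsilon(0)=l^\ast_\varepsilon(0)/(2i)=a_\varepsilon\sqrt{\varepsilon}/2$ via \eqref{223-at0} --- your expansion faithfully reproduces the paper's slip, and the leading constant $-iA_\varepsilon\Gamma(3/4)^{1/2}$, which is all that is used later, is unaffected.
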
 

Lemma~\ref{at0}
shows how the adiabatic approximation breaks down as $t \to 0$: 
The adiabatic approximation would yield $l_{\varepsilon}(t)=\varepsilon {t}/2$ for (minus) the 
exponent of the Gaussian as $\varepsilon \to 0$ whereas the leading term in 
\eqref{223-at0} is $i a\varepsilon^{\frac12}$ which does not go to zero as $t\to0$.  
The corresponding term of order $\varepsilon {t}$ appears 
only as the second term $i a^2 \varepsilon {t}\sim  C^2 \varepsilon {t}/2$, 
$C=2\Gamma(3/4)/\Gamma(1/4)\approx 0.676$. The state at time $t=0$, $v_\varepsilon(0,x)$, 
is a Gaussian, which is a result of general theorems (see Hagedorn et al.\cite{HLS}),  but 
the speed of spreading is $C\varepsilon^{1/4}\sqrt{t}$ times slower 
than the one given by the adiabatic approximation and, at time zero, it 
remains as a finite Gaussian of size $C\varepsilon^{-1/4}$ whereas the adiabatic 
approximation gives a completely flat Gaussian.

\noindent
\textbf{Behavior of $v_\varepsilon(1/\varepsilon)$ as $\varepsilon \to 0$ 
and the survival probability.} 
The following theorem states the main result of this section for the case $L=0$. 
The theorem explicitly exhibits that the state at the microscopic time $1/\varepsilon$, 
when the Hamiltonian returns to the initial $-(1/2)d^2/dx^2 + (1/2)x^2$,
is highly oscillating as $\varepsilon \to 0$ and the 
adiabatic approximation is completely broken down.  

We introduce notation for the leading terms in the asymptotic expansions \eqref{llep} and \eqref{m-asymptotic}. We define
\begin{align}
l^\ast_{\varepsilon,0}(1/\varepsilon)&=\frac{1-2\sqrt{2}i\cos(1/\varepsilon)}{3+2\sqrt{2}\sin(1/\varepsilon)} ,\label{l0}\\
m_{\varepsilon,0}(1/\varepsilon)&=\frac{\pi^{-1/4}}{(\sqrt{2}e^{i/\varepsilon}+ i)^{1/2}}.\label{m0}
\end{align}

We have proven the following theorem:

\begin{theorem} \label{1} 
{\rm (1)} Let $l^\ast_{\varepsilon,0}(1/\varepsilon)$ and $m_{\varepsilon,0}(1/\varepsilon)$ be given by  
\eqref{l0} and \eqref{m0}, respectively. Then 
the solution $v_\varepsilon(t,x)$ of the 
initial value problem \eqref{harmonic} satisfies as $\varepsilon \to 0$,  
\begin{equation} \label{wave-asymp}
\|  
v_\varepsilon(1/{\varepsilon},x) 
- m_{\varepsilon,0}(1/\varepsilon)e^{-l^\ast_{\varepsilon,0} (1/\varepsilon)x^2/2}\| \leq C \varepsilon.
\end{equation} 
We have 
\begin{equation} 
|m_{\varepsilon,0}(1/\varepsilon)|^4= \pi^{-1}
\bigl(3+2\sqrt{2}\sin(1/{\varepsilon})\bigr)^{-1}
= \pi^{-1} \real l^\ast_{\varepsilon,0}(1/\varepsilon). 
\label{mlep}
\end{equation} 
{\rm (2)} The survival probability  
of the ground state $\varphi_0(x)= \pi^{-\frac14}e^{-x^2/2}$ at time $1/\varepsilon$ is 
equal to $1/\sqrt{2}+ O(\varepsilon)$.
\end{theorem}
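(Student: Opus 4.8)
The plan is to combine the exact representation \eqref{sol} with the asymptotics already proved, reducing everything to elementary Gaussian integrals. When $L=0$ the potential in \eqref{harmonic} equals $\varepsilon^2t^2x^2/2$ for every $t$ (it is continuous across $t=0$), so the opening lemma of this section applies directly and $v_\varepsilon(t,x)=m_\varepsilon(t)e^{-l^\ast_\varepsilon(t)x^2/2}$ holds exactly, with $l^\ast_\varepsilon$ the solution of \eqref{Riccati-eq}--\eqref{IC} from Lemma~\ref{negative-t} and $m_\varepsilon$ the amplitude \eqref{sol-m}; in particular $v_\varepsilon(1/\varepsilon,x)=m_\varepsilon(1/\varepsilon)e^{-l^\ast_\varepsilon(1/\varepsilon)x^2/2}$. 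By \eqref{llep} and \eqref{m-asymptotic}, $l^\ast_\varepsilon(1/\varepsilon)=l^\ast_{\varepsilon,0}(1/\varepsilon)+O(\varepsilon)$ and $m_\varepsilon(1/\varepsilon)=m_{\varepsilon,0}(1/\varepsilon)+O(\varepsilon)$. From \eqref{l0}, $\real l^\ast_{\varepsilon,0}(1/\varepsilon)=(3+2\sqrt2\sin(1/\varepsilon))^{-1}$ lies in the fixed interval $[(3+2\sqrt2)^{-1},(3-2\sqrt2)^{-1}]\subset(0,\infty)$, so for small $\varepsilon$ one has $\real l^\ast_\varepsilon(1/\varepsilon)\ge c>0$ with $c$ independent of $\varepsilon$, and $|m_\varepsilon(1/\varepsilon)|$, $|m_{\varepsilon,0}(1/\varepsilon)|$ stay bounded.

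For part (1), writing $l=l^\ast_\varepsilon(1/\varepsilon)$, $l_0=l^\ast_{\varepsilon,0}(1/\varepsilon)$, $m=m_\varepsilon(1/\varepsilon)$, $m_0=m_{\varepsilon,0}(1/\varepsilon)$, I would split
\[
v_\varepsilon(1/\varepsilon,\cdot)-m_0e^{-l_0x^2/2}=(m-m_0)e^{-lx^2/2}+m_0\bigl(e^{-lx^2/2}-e^{-l_0x^2/2}\bigr).
\]
The first summand has $L^2$ norm $|m-m_0|(\pi/\real l)^{1/4}=O(\varepsilon)$. For the second, with $l_\theta=(1-\theta)l_0+\theta l$ one has $e^{-lx^2/2}-e^{-l_0x^2/2}=-\tfrac{x^2}{2}(l-l_0)\int_0^1e^{-l_\theta x^2/2}\,d\theta$, and since $\real l_\theta\ge c$ for all $\theta\in[0,1]$ the integrand is dominated by $e^{-cx^2/2}$; multiplying by the bounded factor $m_0$ and taking the $L^2$ norm yields $O(|l-l_0|)=O(\varepsilon)$, which gives \eqref{wave-asymp}. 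The identity \eqref{mlep} then follows from the elementary computation $|\sqrt2e^{i/\varepsilon}+i|^2=2\cos^2(1/\varepsilon)+(\sqrt2\sin(1/\varepsilon)+1)^2=3+2\sqrt2\sin(1/\varepsilon)$ combined with \eqref{l0}; it is the leading-order instance of the identity $|m_\varepsilon(t)|^4=\pi^{-1}\real l_\varepsilon(t)$ from the opening lemma.

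For part (2), since $\varphi_0$ is real,
\[
\langle v_\varepsilon(1/\varepsilon,\cdot),\varphi_0\rangle=\pi^{-1/4}m\int_{\mathbb R}e^{-(l+1)x^2/2}\,dx=\pi^{-1/4}m\Bigl(\tfrac{2\pi}{l+1}\Bigr)^{1/2},
\]
the Gaussian integral being valid with the principal branch because $\real(l+1)>1$. Replacing $(m,l)$ by $(m_0,l_0)$ costs $O(\varepsilon)$, the right-hand side being Lipschitz in $(m,l)$ on the bounded region above (on which also $|l+1|\ge1$), so the survival probability equals $2\pi^{1/2}|m_0|^2|l_0+1|^{-1}+O(\varepsilon)$. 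By \eqref{mlep}, $|m_0|^2=\pi^{-1/2}(3+2\sqrt2\sin(1/\varepsilon))^{-1/2}$, and from \eqref{l0} a short calculation gives $l_0+1=(4+2\sqrt2\sin(1/\varepsilon)-2\sqrt2i\cos(1/\varepsilon))/(3+2\sqrt2\sin(1/\varepsilon))$, hence $|l_0+1|^2=8/(3+2\sqrt2\sin(1/\varepsilon))$. The oscillatory factor $3+2\sqrt2\sin(1/\varepsilon)$ cancels, leaving $|\langle v_\varepsilon(1/\varepsilon,\cdot),\varphi_0\rangle|^2=2\pi^{1/2}\cdot\pi^{-1/2}(3+2\sqrt2\sin(1/\varepsilon))^{-1/2}\cdot(3+2\sqrt2\sin(1/\varepsilon))^{1/2}/(2\sqrt2)+O(\varepsilon)=1/\sqrt2+O(\varepsilon)$.

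There is no deep difficulty here, as the substantive work has been front-loaded into the preceding lemmas; the one point needing care is that the replacement of the exact coefficients by their leading terms must produce an error that is $O(\varepsilon)$ \emph{in $L^2$ and uniformly in $\varepsilon$}, which rests entirely on the uniform bound $\real l^\ast_\varepsilon(1/\varepsilon)\ge c>0$ read off from \eqref{l0}: this keeps all the Gaussians in a fixed bounded subset of $L^2$ and makes the relevant maps Lipschitz with $\varepsilon$-independent constants. The exact cancellation of $3+2\sqrt2\sin(1/\varepsilon)$ in part (2) is not an obstacle but the step I would verify most carefully, since it is what converts a wildly oscillating pair of coefficients into the clean constant $1/\sqrt2$.
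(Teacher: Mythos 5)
Your proof is correct and follows essentially the same route as the paper: combine the exact Gaussian representation with the asymptotics of $l^\ast_\varepsilon(1/\varepsilon)$ and $m_\varepsilon(1/\varepsilon)$, exploit the uniform lower bound on $\real l^\ast_\varepsilon(1/\varepsilon)$, compute the Gaussian inner product, and observe the cancellation of the oscillatory factor $3+2\sqrt{2}\sin(1/\varepsilon)$. The only difference is that you flesh out the $L^2$ error estimate (the interpolation $l_\theta$ and the dominated integrand) that the paper dismisses with ``part (1) is obvious.''
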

\begin{remark}
The survival probability in part (2) was also computed in Theorem 1 of Bachmann et al.\cite{BFG}.
\end{remark}
\begin{proof} 
Since $\real\ell^{\ast}_{\varepsilon}
(1/\varepsilon)\geq(3+2\sqrt{2})^{-1}$, part (1) is obvious.
We only prove (2). Using \eqref{wave-asymp} and explicitly computing 
the Gaussian integral, we obtain  
\begin{align*}  
\langle v_\varepsilon(1/\varepsilon,x), \varphi_0(x)\rangle
& = \int_{\mathbb{R}} \pi^{-\frac14} e^{-x^2/2}
m_{\varepsilon,0}(1/\varepsilon)
e^{-l^\ast_{\varepsilon,0}(1/\varepsilon)x^2/2} dx + O(\varepsilon) \\ 
& = 
\frac{\sqrt{2}\pi^{1/4}m_{\varepsilon,0}(1/\varepsilon)}
{(1+l^\ast_{\varepsilon,0}(1/\varepsilon))^{1/2}}  + O(\varepsilon).
\end{align*}
Insert the expressions from \eqref{l0} and \eqref{m0}.
Since 
\begin{equation*}
(\sqrt{2}e^{i/\varepsilon}+ i)
\Big(1+\frac{1-2\sqrt{2}i\cos(1/\varepsilon)}{3+2\sqrt{2}\sin(1/\varepsilon)}\Big) 
= { 
e^{i/\varepsilon}\frac{6\sqrt{2}+ 8\sin(1/\varepsilon)}{3+2\sqrt{2}\sin(1/\varepsilon)} = 
e^{i/\varepsilon} 2\sqrt{2},}
\end{equation*}
we conclude that 
\begin{equation*}
|\langle v_\varepsilon(1/\varepsilon,x), \varphi_0(x)\rangle|^2 
= { 
\frac{2(3+2\sqrt{2}\sin(1/\varepsilon))}{6\sqrt{2}+ 8\sin(1/\varepsilon)} + O(\varepsilon)}
= \frac{1}{\sqrt{2}} + O(\varepsilon).
\end{equation*} 
\end{proof} 

\section{The case $L >0$.} 
We next study the case $L>0$ and examine how the asymptotic behavior
as $\varepsilon \to 0$ of the solution depends on the macroscopic length $L$ of time 
which the particle has spent in the continuum of $-(1/2)\partial_x^2$. 
We let $v_\varepsilon(t,x)$ 
be the solution of the initial value problem \eqref{harmonic} 
with $L>0$. Then, by translating in time the result for the case 
$L=0$ by $-L/\varepsilon$,   
we see from \eqref{223-at0} with $\kappa=\kappa_\varepsilon$ and \eqref{m223} that 
\[
v_\varepsilon(-{L}/{\varepsilon}, x) = -i A_\varepsilon \, \Gamma(3/4)^{1/2}  
e^{-ia_\varepsilon \varepsilon^{1/2}x^2/2}, \ \  
a_\varepsilon = \frac{2\Gamma(3/4)}{\Gamma(1/4)}\kappa_\varepsilon\, .
\]

\noindent
\textbf{Solution at the time exiting the continuum.} 
We may explicitly compute 
\begin{align}
v_\varepsilon({L}/{\varepsilon}, x)& = \bigl(e^{-2iLH_0/\varepsilon}
v_\varepsilon(-{L}/{\varepsilon})\bigr)(x) \notag \\
& = \frac{-i A_\varepsilon \, \Gamma(3/4)^{1/2}
e^{-\frac{\pi{i}}{4}}}
{(4{\pi}L/\varepsilon)^{1/2}} 
\int_{\mathbb{R}} e^{{ \frac{i\varepsilon (x-y)^2}{4L}}
-i\frac{a_\varepsilon  \varepsilon^{1/2}y^2}{2}}dy  \notag \\
&  = 
\frac{-A_\varepsilon \Gamma(3/4)^{\frac12}{\varepsilon}^{\frac14}}
{{ (-2a_\varepsilon L+\sqrt{\varepsilon})^{1/2}}} 
e^{\frac{-i\varepsilon{a_\varepsilon }}{{ 2(-2a_\varepsilon L+\sqrt{\varepsilon})}}x^2}\,. \label{334}
\end{align}

\noindent
\textbf{Solution after the particle exits the continuum.}
We want to evaluate at time $t= \frac{L+1}{\varepsilon}$ the solution of 
\[
i\partial_t v_\varepsilon(t,x) 
= -\frac12\partial_x^2 v_\varepsilon + \frac{(t-L/\varepsilon)^2 \varepsilon^2 x^2}{2}v_\varepsilon
\]
when $v_\varepsilon(L/\varepsilon,x)$ is given by \eqref{334}.  
Translation of $t$ by $L/\varepsilon$ once again shows that 
$v_\varepsilon((L+1)/\varepsilon,x)=z_\varepsilon(1/\varepsilon,x)$, where 
$z_\varepsilon(t,x)$ is the solution of   
\begin{align}
& i\partial_t z_\varepsilon(t,x) 
= -\frac12\partial_x^2 z_\varepsilon + \frac{t^2 \varepsilon^2 x^2}{2} z_\varepsilon, \\ 
& z_\varepsilon(0,x)= 
\frac{-A_\varepsilon \Gamma(3/4)^{\frac12}{\varepsilon}^{\frac14}}
{{ (-2a_\varepsilon L+\sqrt{\varepsilon})^{1/2}}} 
e^{\frac{-i\varepsilon{a_\varepsilon }}{{ 2(-2a_\varepsilon L+\sqrt{\varepsilon})}}x^2}\,. 
   \label{init-L}
\end{align}
We know from \eqref{ls} that $z_\varepsilon (t,x)$ is of the form 
\[
z_\varepsilon(t,x)= {m_{\varepsilon}^\ast(t)}{e^{-l_\varepsilon(t,\gamma)x^2/2}}
\] 
where $l_\varepsilon(t,\gamma)$ and $m_\varepsilon^\ast(t)$ are  
given by \eqref{ls} and \eqref{sol-m} respectively, with $\gamma$ 
in place of $\kappa$ and $B_\varepsilon$ in place of $A_\varepsilon$, in particular, 
\begin{gather}
l_{\varepsilon}(t,\gamma) =\frac{-4is^\frac12
\big({\gamma}J_{-\frac34}(2\sqrt{s}) + J_{\frac34}(2\sqrt{s})\big)}
{t\big(- J_{-\frac14}(2\sqrt{s}) +\gamma J_{\frac14}(2\sqrt{s})\big)} ,
\label{ls-c} \\
m_\varepsilon^\ast (t) = B_\varepsilon \big(-s^{1/8}J_{-\frac14}(2\sqrt{s}) + {\gamma}s^{1/8}
J_{\frac14}(2\sqrt{s})\big)^{-1/2} . \label{sol-m-after}
\end{gather}
We will choose $\gamma$ and $B_\varepsilon$ such that the initial condition 
\eqref{init-L} is met, viz. 
\begin{gather}
l_\varepsilon(0,\gamma) = 
\frac{i\varepsilon{a_\varepsilon}}{-2{a_\varepsilon}L+\sqrt{\varepsilon}}, \label{l10-0}\\
m_\varepsilon^\ast (0) = 
\frac{-A_\varepsilon \Gamma(3/4)^{\frac12} {\varepsilon}^{\frac14}}
{(-2{a_\varepsilon}L+\sqrt{\varepsilon})^{1/2}}. \label{m0ep-0}
\end{gather}
By virtue of \eqref{Bessel-expansion} we may evaluate $l_\varepsilon(0,\gamma)$ of 
\eqref{ls-c} and $m_\varepsilon^\ast(0)$ of \eqref{sol-m-after}:   
\begin{equation} \label{l10}
l_\varepsilon(0,\gamma)= \frac{2i\gamma \varepsilon^{\frac12}\Gamma(3/4)}{\Gamma(1/4)}, \quad 
m_\varepsilon^\ast (0)= -i B_\varepsilon \Gamma(3/4)^{1/2}.  
\end{equation} 
Equating the right hand sides of \eqref{l10-0} and \eqref{m0ep-0} with those of 
\eqref{l10}, we have 
\begin{equation} \label{CC}
 \gamma=\frac{\varepsilon^\frac12 \kappa_\varepsilon}
{ -2a_\varepsilon L+ \sqrt{\varepsilon}}, \quad 
B_\varepsilon = \frac{-iA_{\varepsilon}\varepsilon^{\frac14}}{
{ (-2{a_\varepsilon}L +\sqrt{\varepsilon})^{1/2}}
}.
\end{equation} 
Hereafter we write $C_1= \Gamma(3/4)/\Gamma(1/4)$ so that 
$a_\varepsilon= 2\kappa_\varepsilon C_1$. Note that $\gamma=\kappa_\varepsilon$ and $B_\varepsilon= A_\varepsilon$ when $L=0$ as 
they should be. 

\noindent
\textbf{Solution when the Hamiltonian returns to $-(1/2)d^2/dx^2 + x^2/2$.}
We study the behavior as $\varepsilon \to 0$ of 
$l_\varepsilon(1/\varepsilon, \gamma)$ and $m_\varepsilon(1/\varepsilon)$. They are given by 
\begin{gather} \label{l-outgoing}
l_\varepsilon ({1}/{\varepsilon}, \gamma)= \left.
\frac{-i\bigl(\gamma J_{-\frac34}(z) + 
J_{\frac34}(z)\bigr)}
{\bigl(-J_{-\frac14}(z) + \gamma J_{\frac14}(z)\bigr)}
\right\vert_{z=\frac1{2\varepsilon}}, \\
m_\varepsilon(1/\varepsilon) = B_\varepsilon \bigl(-(z/2)^{1/4}J_{-\frac14}(z) + {\gamma}(z/2)^{1/4}
J_{\frac14}(z)\bigr)^{-1/2} \Big\vert_{z=\frac1{2\varepsilon}}. 
\label{m-outgoing}
\end{gather}
We substitute the first of \eqref{CC} for $\gamma$, 
which yields 
\begin{equation} 
\left.
l_\varepsilon ({1}/{\varepsilon}, \gamma)=
i \frac{{ - 4\kappa_{\varepsilon} C_1 L} J_{\frac34}(z)+ \varepsilon^{1/2}
\big(\kappa_{\varepsilon}J_{-\frac34}(z)
+J_{\frac34}(z)\big)}
{{- 4\kappa_{\varepsilon}C_1 L} J_{-\frac14}(z)- \varepsilon^{1/2}\big(
\kappa_{\varepsilon}J_{\frac14}(z)- J_{-\frac14}(z)\big)}
\right\vert_{z=\frac1{2\varepsilon}}\,.
\label{el1}
\end{equation} 
We substitute $\kappa_{\varepsilon}= - e^{\frac{i\pi}{4}} + O(\varepsilon)$ in \eqref{el1} 
and use \eqref{Bessel-asymptotic}. Denote 
\[
L_1= -4C_1 L, \quad 
\alpha=(2\varepsilon)^{-1}-4^{-1}\pi.
\]
Then, as $\varepsilon \to 0$, $l_\varepsilon(1/{\varepsilon})$ (omitting $\gamma$ in the notation) is asymptotically 
equal to 
\begin{align*}  
& 
i\; \frac{
{ -L_1} e^{\frac{i\pi}{4}}
\cos(\alpha -\frac{3\pi}{8}) + 
\varepsilon^{1/2}\bigl(-e^{\frac{i\pi}{4}} 
\cos(\alpha+\frac{3\pi}{8} ) 
+\cos(\alpha-\frac{3\pi}{8} )
\bigr)+O(\varepsilon)}
{{ -L_1} e^{\frac{i\pi}{4}}\cos(\alpha+\frac{\pi}{8})+ 
\varepsilon^{1/2}\bigl( 
\cos(\alpha+\frac{\pi}{8})+  e^{\frac{i\pi}{4}}
\cos(\alpha-\frac{\pi}{8})\bigr)
+O(\varepsilon) }  \notag \\
& = 
i\; \frac{{ -L_1} 
\sin(\frac1{2\varepsilon} -\frac{\pi}{8}) + 
\varepsilon^{1/2}\bigl(e^{\frac{-i\pi}{4}} \sin(\frac1{2\varepsilon}-
\frac{\pi}{8} ) 
+ \sin(\frac1{2\varepsilon}-\frac{3\pi}{8} )
\bigr)+O(\varepsilon)}
 {-L_1 
\cos(\frac1{2\varepsilon}-\frac{\pi}{8})+ 
\varepsilon^{1/2}\big(e^{\frac{-i\pi}{4}} 
\cos(\frac1{2\varepsilon}- \frac{\pi}{8}) 
+ \cos(\frac1{2\varepsilon}- \frac{3\pi}{8})  
\bigr)
+O(\varepsilon) }\,  .
\end{align*} 

After a simple but tedious computation we simplify the equation above 
and obtain the following lemma. 

\begin{lemma} Define $\rho= \frac1{2\varepsilon}-\frac{\pi}{8}$ and 
${ B=L_1 - \sqrt{2\varepsilon}}$. 
As $\varepsilon \to 0$, we have 
\begin{equation} 
l_\varepsilon(1/{\varepsilon})= 
\frac{\varepsilon + i\left(B^2 \sin 2\rho + 
\sqrt{2\varepsilon}{ B}\cos 2\rho \right)+O(\varepsilon)}
{B^2+\varepsilon + B^2\cos2\rho -\sqrt{2\varepsilon}B \sin 2\rho + O(\varepsilon)}
\end{equation} 
\end{lemma}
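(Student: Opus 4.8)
The proof is the ``simple but tedious computation'' alluded to just above: starting from the asymptotic ratio for $l_\varepsilon(1/\varepsilon)$ displayed immediately before the lemma, everything left is elementary algebra, so the error terms stay $O(\varepsilon)$ uniformly for fixed $L>0$ (they originate from the $O(1/x^2)$ term in \eqref{Bessel-asymptotic} with $x=1/2\varepsilon$ and from $\kappa_\varepsilon=-e^{i\pi/4}+O(\varepsilon)$). The plan is as follows.

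First I would substitute $\tfrac1{2\varepsilon}-\tfrac{3\pi}{8}=\rho-\tfrac\pi4$ in that displayed formula and expand using $\sin(\rho-\tfrac\pi4)=\tfrac1{\sqrt2}(\sin\rho-\cos\rho)$, $\cos(\rho-\tfrac\pi4)=\tfrac1{\sqrt2}(\cos\rho+\sin\rho)$ and $e^{-i\pi/4}=\tfrac1{\sqrt2}(1-i)$. Writing $l_\varepsilon(1/\varepsilon)=iN/D$ with $N,D$ the numerator and denominator of that ratio, the $\varepsilon^{1/2}$-contributions combine so that the coefficient of $\sin\rho$ in $N$ and of $\cos\rho$ in $D$ is $-L_1+\sqrt{2\varepsilon}=-B$, and one gets
\begin{align*}
N&=-B\sin\rho-\sqrt{\varepsilon/2}\,\cos\rho-i\sqrt{\varepsilon/2}\,\sin\rho+O(\varepsilon),\\
D&=-B\cos\rho+\sqrt{\varepsilon/2}\,\sin\rho-i\sqrt{\varepsilon/2}\,\cos\rho+O(\varepsilon).
\end{align*}
Then I would rationalize, $l_\varepsilon(1/\varepsilon)=iN\overline{D}/|D|^2$, and evaluate the three real quantities $\real(N\overline{D})$, $\operatorname{Im}(N\overline{D})$ and $|D|^2$ using the identities $2\sin\rho\cos\rho=\sin2\rho$, $\cos^2\rho-\sin^2\rho=\cos2\rho$, $\sin^2\rho+\cos^2\rho=1$. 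The computation yields, after the $B\sqrt{\varepsilon}$ cross-terms cancel, $\operatorname{Im}(N\overline{D})=-\varepsilon/2+O(\varepsilon)$, together with $\real(N\overline{D})=\tfrac12B^2\sin2\rho+\sqrt{\varepsilon/2}\,B\cos2\rho+O(\varepsilon)$ and $|D|^2=\tfrac12B^2(1+\cos2\rho)-\sqrt{\varepsilon/2}\,B\sin2\rho+\tfrac\varepsilon2+O(\varepsilon)$; hence $iN\overline{D}=\tfrac\varepsilon2+i\bigl(\tfrac12B^2\sin2\rho+\sqrt{\varepsilon/2}\,B\cos2\rho\bigr)+O(\varepsilon)$.

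Finally, multiplying numerator and denominator of $iN\overline{D}/|D|^2$ by $2$ and using $2\sqrt{\varepsilon/2}=\sqrt{2\varepsilon}$ and $B^2(1+\cos2\rho)=B^2+B^2\cos2\rho$ produces exactly the asserted expression. The only delicate point is the bookkeeping of the $\varepsilon$-order terms: one must check both that the $B\sqrt{\varepsilon}$-contributions to $\operatorname{Im}(N\overline{D})$ cancel completely and that the surviving $\tfrac\varepsilon2\sin^2\rho$ and $\tfrac\varepsilon2\cos^2\rho$ pieces recombine into the explicit $+\varepsilon$ in the denominator (with the $+\varepsilon$ in the numerator coming from $-\operatorname{Im}(N\overline{D})$), all of this kept separate from the genuine $O(\varepsilon)$ remainder. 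This is mechanical, but it is the step where a careless computation would fail to recover the compact final form.
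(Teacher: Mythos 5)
Your computation is correct and is exactly the ``simple but tedious computation'' the paper alludes to but omits: starting from the displayed ratio, using angle-addition formulas with $\rho=\tfrac1{2\varepsilon}-\tfrac\pi8$, grouping the $\varepsilon^{1/2}$ terms so that $-L_1+\sqrt{2\varepsilon}=-B$ appears, rationalizing, and simplifying via double-angle identities. I verified each displayed intermediate (the forms of $N$, $D$, $\real(N\overline D)$, $\operatorname{Im}(N\overline D)=-\varepsilon/2$, $|D|^2$), and the final multiplication by $2$ reproduces the lemma's formula exactly.
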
 
We note that in the denominator we have
\begin{align}
A(\varepsilon)& \equiv B^2+\varepsilon + B^2\cos2\rho 
-\sqrt{2\varepsilon}B \sin 2\rho  \notag \\
& = B^2 \bigl(
1 +\frac{\varepsilon}{B^2} + 
\bigl(1+ \frac{2\varepsilon}{B^2}\bigr)^{\frac12}\cos(2\rho + \beta)
\bigr) \geq C\varepsilon^2 
\end{align} 
where 
\[
\sin{\, \beta} = \frac{\sqrt{2\varepsilon}}{(B^2+ 2\varepsilon)^{\frac12}}.
\] 
However, $A(\varepsilon)+O(\varepsilon)$ can be controlled only when 
$A(\varepsilon)\geq C \varepsilon^{1-\delta}$ for a 
$\delta>0$ and this requires 
\begin{equation} \label{sufficient}
\cos(2\rho + \beta)>-1+C\varepsilon^{1-\delta}, \quad \delta>0,
\end{equation}  
in which case we have indeed 
\begin{align} 
\frac{A(\varepsilon)}{B^2}& = 1 +\frac{\varepsilon}{B^2} + 
\bigl(1+ \frac{2\varepsilon}{B^2}\bigr)^{\frac12}\cos(2\rho + \beta) \notag \\
& > 1 +\frac{\varepsilon}{B^2} + 
\bigl(1+ \frac{\varepsilon}{B^2}-O(\varepsilon^2)\bigr)(-1 +C\varepsilon^{1-\delta}) 
\geq C\varepsilon^{1-\delta}\, .\label{lower-bound}
\end{align}

Let $\Omega \subset (0,1)$ be the set of $\varepsilon$ which does not satisfy 
\eqref{sufficient}. Then, Taylor's formula implies for some $C>0$ that 
\begin{equation} \label{f-st}
\Omega \subset \bigcup_{n=0}^\infty \bigl\{\varepsilon >0 \colon 
\bigl| \frac1{\varepsilon}-\frac{\pi}4 +  \beta -(2n+1)\pi 
\bigr|<C\varepsilon^{(1-\delta)/2}\bigr\}.
\end{equation} 
The definition of ${\beta}$ and Taylor's formula imply  
\[
\sin\beta= { \beta} - O({\beta}^3)= \frac{\sqrt{2\varepsilon}}{L_1}\bigl(1 
+ 
\frac{2\sqrt{2\varepsilon}}{L_1} + {\frac{4\varepsilon}{L_1^2}}\bigr)^{-\frac12} 
= \frac{\sqrt{2\varepsilon}}{L_1}\bigl(1 -  
\frac{\sqrt{2\varepsilon}}{L_1} + O(\varepsilon)\bigr) 
\]
and as $\varepsilon \to 0$   
\begin{equation} \label{2nd}
{\beta}= \frac{\sqrt{2\varepsilon}}{L_1} + O(\varepsilon). 
\end{equation} 
\eqref{2nd} implies that 
$\varepsilon>0$ which satisfies \eqref{f-st} must satisfy   
\begin{equation} \label{3rd}
\bigl| \frac1{\varepsilon}-\frac{\pi}4  - \frac{\sqrt{2\varepsilon}}{L_1} 
-(2n+1)\pi \bigr|<C\varepsilon^{(1-\delta)/2},
\end{equation} 
for some $n$ and (another) constant $C>0$. We want to solve \eqref{3rd} for 
$\varepsilon$ in terms of $n$. \eqref{3rd} is equivalent to 
\begin{equation} \label{4}
\bigl| \varepsilon -\frac1{\frac{\pi}4  -  \frac{2\sqrt{\varepsilon}}{L_1} +(2n+1)\pi} \bigr|<
\frac{C\varepsilon^{\frac{3-\delta}{2}}}
{\frac{\pi}4  - 
\frac{2\sqrt{\varepsilon}}{L_1} +(2n+1)\pi} .
\end{equation} 
For small $\varepsilon>0$ or for large $n$, this implies 
{ $(4n\pi)^{-1} \leq |\varepsilon|\leq C (n\pi)^{-1}$} and 
\[
\bigl| \varepsilon -\frac1{\frac{\pi}4 -
\frac{2\sqrt{\varepsilon}}{L_1}+(2n+1)\pi}\bigr|
< C n^{-(5-\delta)/2}. 
\]
Define 
\begin{equation}\label{b}
{ \varepsilon}(n)= \bigl(\frac{\pi}{4}+ (2n+1){ \pi} \bigr)^{-\frac12}.
\end{equation}
Then  
\[
|\sqrt{\varepsilon}- { \varepsilon(n)}|= 
\frac{|\varepsilon - {\varepsilon (n)}^2|}
{\sqrt{\varepsilon}+ { \varepsilon(n)}}
\leq Cn^{{ -\frac{3-\delta}{2}} }
\]
and 
\[
\bigl|\frac1{\frac{\pi}4 - \frac{2\sqrt{\varepsilon}}{L_1}+(2n+1)\pi}
- \frac1{\frac{\pi}4 -
\frac{2{ \varepsilon(n)}}{L_1}+(2n+1)\pi}\bigr| 
\leq Cn^{{ -\frac{7-\delta}{2}} }
\]
In this way we have shown that for some $C>0$ 
\begin{equation} \label{tildeW}
\Omega \subset \tilde{\Omega}=\bigcup_{n=0}^\infty 
\bigl\{\varepsilon \colon 
\bigl| \varepsilon -\frac1{
\frac{\pi}4 
 - \frac{2\varepsilon(n)}{L_1}
+(2n+1)\pi
}\bigr|
< C n^{-(5-\delta)/2} \bigr\}.
\end{equation}  
Thus we have obtained the following theorem. 

\begin{theorem} \label{thm2}
Let $0<\delta<1$, $B=L_1  -  \sqrt{2\varepsilon}$, $\varepsilon(n)$ be defined by \eqref{b}, 
and $\tilde{\Omega}$ by \eqref{tildeW} with a suitable constant $C>0$. 
Denote ${ \rho= \frac1{2\varepsilon} - \frac{\pi}{8}}$. 
Then, for $\varepsilon\notin\tilde{\Omega}$, 
$B^2+\varepsilon + B^2\cos2\rho  -  \sqrt{2\varepsilon}B \sin 2\rho\geq C \varepsilon^{1-\delta}$ 
and as $\varepsilon \to 0$  
\begin{equation}  
l_\varepsilon(1/{\varepsilon})= 
\frac{\varepsilon + i\left(B^2 \sin 2\rho  + 
\sqrt{2\varepsilon}{B}\cos 2\rho \right)+O(\varepsilon)}
{B^2+\varepsilon + B^2\cos2\rho  -  \sqrt{2\varepsilon}B \sin 2\rho}\left(1+ O(\varepsilon^\delta)\right), 
\end{equation} 
\end{theorem}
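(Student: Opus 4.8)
The theorem merely collects the estimates assembled above, so the plan is to start from the asymptotic rational expression for $l_\varepsilon(1/\varepsilon)$ furnished by the preceding lemma,
\[
l_\varepsilon(1/\varepsilon)= \frac{N(\varepsilon)+O(\varepsilon)}{A(\varepsilon)+O(\varepsilon)},\qquad
N(\varepsilon)= \varepsilon + i\bigl(B^2\sin 2\rho + \sqrt{2\varepsilon}\,B\cos 2\rho\bigr),
\]
with $A(\varepsilon)= B^2+\varepsilon+B^2\cos 2\rho-\sqrt{2\varepsilon}\,B\sin 2\rho$, and to show that for $\varepsilon\notin\tilde\Omega$ the denominator dominates its $O(\varepsilon)$ perturbation, so that the quotient may be formed with a multiplicative error $1+O(\varepsilon^\delta)$. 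The first observation is that, since $L>0$, the quantity $B=L_1-\sqrt{2\varepsilon}$ tends to $L_1=-4C_1L\neq0$, hence $B^2\geq c>0$ for all small $\varepsilon$; consequently the asserted lower bound $A(\varepsilon)\geq C\varepsilon^{1-\delta}$ is equivalent to $A(\varepsilon)/B^2\geq C'\varepsilon^{1-\delta}$.

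Next I would invoke the amplitude--phase rewriting
\[
\frac{A(\varepsilon)}{B^2}= 1+\frac{\varepsilon}{B^2}+\Bigl(1+\frac{2\varepsilon}{B^2}\Bigr)^{1/2}\cos(2\rho+\beta),\qquad
\sin\beta=\frac{\sqrt{2\varepsilon}}{(B^2+2\varepsilon)^{1/2}},
\]
so that, by the elementary chain \eqref{lower-bound}, the bound $A(\varepsilon)/B^2\geq C'\varepsilon^{1-\delta}$ follows from condition \eqref{sufficient}, i.e.\ from $\cos(2\rho+\beta)>-1+C\varepsilon^{1-\delta}$. It therefore suffices to prove that every $\varepsilon\in(0,1)$ violating \eqref{sufficient} lies in $\tilde\Omega$, that is, $\Omega\subset\tilde\Omega$.

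This inclusion is the one genuine piece of work, and I would organize it in three substeps. First, from $\cos(2\rho+\beta)\leq-1+C\varepsilon^{1-\delta}$ Taylor's theorem forces $2\rho+\beta$ to within $C\varepsilon^{(1-\delta)/2}$ of an odd multiple of $\pi$, which is the inclusion \eqref{f-st}. Second, expanding $\beta=\sqrt{2\varepsilon}/L_1+O(\varepsilon)$ as in \eqref{2nd} converts \eqref{f-st} into the near-resonance condition \eqref{3rd}. Third, one solves \eqref{3rd} for $\varepsilon$ in terms of $n$: reading off first that $\varepsilon\asymp(n\pi)^{-1}$, then replacing $\sqrt\varepsilon$ by the explicit quantity $\varepsilon(n)$ of \eqref{b} and tracking the resulting corrections of orders $n^{-(3-\delta)/2}$ and $n^{-(7-\delta)/2}$, one arrives at $\Omega\subset\tilde\Omega$ in the form \eqref{tildeW}. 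I expect this third substep --- propagating the error terms uniformly in $n$ through the successive substitutions --- to be the main, though entirely routine, obstacle.

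Finally, with $A(\varepsilon)\geq C\varepsilon^{1-\delta}$ in hand for $\varepsilon\notin\tilde\Omega$, the additive perturbation of the denominator satisfies $O(\varepsilon)/A(\varepsilon)=O(\varepsilon^\delta)$, hence $A(\varepsilon)+O(\varepsilon)=A(\varepsilon)\bigl(1+O(\varepsilon^\delta)\bigr)$; dividing the numerator accordingly yields
\[
l_\varepsilon(1/\varepsilon)= \frac{N(\varepsilon)+O(\varepsilon)}{A(\varepsilon)}\bigl(1+O(\varepsilon^\delta)\bigr),
\]
which is the claimed asymptotics.
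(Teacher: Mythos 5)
Your plan follows the paper's own argument step for step: you start from the rational form in the preceding lemma, pass to the amplitude--phase rewriting of $A(\varepsilon)/B^2$, reduce the lower bound to \eqref{sufficient} via \eqref{lower-bound}, establish $\Omega\subset\tilde\Omega$ through the chain \eqref{f-st}, \eqref{2nd}, \eqref{3rd}, \eqref{b}, \eqref{tildeW}, and then absorb the denominator's $O(\varepsilon)$ perturbation into the factor $1+O(\varepsilon^\delta)$. The proposal is correct and coincides with the paper's approach.
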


We notice that $\real l_\varepsilon(1/\varepsilon) \leq C\varepsilon^\delta$ 
and $|v_\varepsilon (1/\varepsilon,x)|\leq C_\varepsilon \exp(-C \varepsilon^\delta x^2/L)$ for $\varepsilon\not\in \tilde{\Omega}$. 
Recall that the free Schr\"odinger operator $-\partial_x^2$ has a zero resonance with resonant function 
$1$. It follows that, as $\varepsilon\not\in \tilde{\Omega}$ approaches $0$,  $v_\varepsilon (1/\varepsilon,x)$ approaches 
an oscillating function of the magnitude of the resonant function of $-\partial_x^2$ 
on every compact interval of $\mathbb{R}$, and it does so faster, when the length $L$ becomes 
longer, $2L$ being the time the particle stays as a free particle. Here the behavior 
as $\varepsilon \to 0$ of the imaginary part of $l_\varepsilon(1/\varepsilon)$ heavily depends on how $\varepsilon$ approaches $0$,  
however, we shall not pursue this point any further here.  

\paragraph{Acknowledgements}
A. Galtbayar was partially supported by the Asian Research Center at NUM, Grant No.2018-3589.
  A. Jensen was partially supported by the Danish 
Council for Independent Research $\mid$ Natural Sciences, Grant DFF--8021–00084B.
K. Yajima was partially supported by the JSPS grant in aid for scientific 
research No. 19K0358.


\end{document}